 \documentclass[twocolumn]{autart}



\usepackage{xcolor}

\usepackage{changes}
\usepackage{cite}
\usepackage{amsmath,amssymb,amsfonts}
\usepackage{algorithm}
\usepackage{algpseudocode} 
\usepackage{tabularx}
\algtext*{EndWhile}
\algtext*{EndIf}
\algtext*{EndFor}

\usepackage{graphicx}
\usepackage{textcomp}
\usepackage{xcolor}

\usepackage{paralist} 
\setlength {\marginparwidth }{2cm}
\usepackage{todonotes}
\usepackage[hyphens]{url}

    


\usepackage{changes}

\usepackage{tikz}
\usetikzlibrary{shapes,arrows,fit,backgrounds,automata}
\usetikzlibrary{shapes.geometric,fit}
\usepackage[utf8]{inputenc}

\tikzset{main node/.style={circle,draw,minimum size=1cm,inner sep=0pt},}
\tikzstyle{container} = [draw, rectangle, inner sep=0.1cm]
\tikzstyle{container-ellipse} = [draw, ellipse, inner sep=0.1cm]

\usepackage{acronym}
\usepackage{amsthm}



\usepackage{soul}
\usepackage{subcaption}

\definechangesauthor[name=Jie Fu, color=blue]{JF}

\definecolor{darkgreen}{rgb}{0,0.5,0}

\newcommand{\ie}{\textit{i.e.}}
 
\newcommand{\obs}{\mathsf{Obs}}
\newcommand{\dist}{\mathcal{D}}
 \newcommand{\sa}{\gamma}
\newcommand{\SA}{\Gamma}

\newcommand{\act}{\mathbf{A}}

\newcommand{\supp}{\mathsf{Supp}}
\newcommand{\plays}{\mathsf{Plays}}

\newcommand{\post}{\mathsf{Post}}
\newcommand{\prefplays}{\mathsf{Prefs}}

\newcommand{\asw}{\mathsf{ASW}}
\newcommand{\dasw}{\mathsf{DASW}}

\newcommand{\poswin}{\mathsf{PWin}}
\newcommand{\last}{\mathsf{last}}

\newcommand{\outcomes}{\mathsf{Outcomes}}

\newtheorem{theorem}{Theorem} 
\newtheorem{definition}{Definition}

\newtheorem{problem}{Problem}

\newtheorem{example}{Example}

\newtheorem{lemma}{Lemma}
\newtheorem{assumption}{Assumption}

\acrodef{pomdp}[POMDP]{partially observable Markov decision process}
\acrodef{mdp}[MDP]{Markov decision process}
\acrodef{asw}[ASW]{Almost-Sure Winning}
\acrodef{cps}[CPS]{Cyber-Physical Systems}
\acrodef{dasw}[DASW]{Deceptive Almost-Sure Winning}


\newcommand{\prog}{\mathsf{Prog}}

\newcommand{\hgame}{\mbox{HG}}


\newcommand{\detected}{\kappa_d}
%


\begin{document}
\begin{frontmatter}
\title{Reactive Synthesis of Sensor Revealing Strategies in  Hypergames on Graphs}

\thanks[footnoteinfo]{This paper was not presented at any IFAC 
meeting. Corresponding author S.~Udupa.}
\author[Gainesville]{Sumukha Udupa}\ead{sudupa@ufl.edu},    
\author[Adelphi]{Ahmed Hemida}\ead{ahmed.h.hemida.ctr@army.mil},
\author[Adelphi]{Charles A. Kamhoua}\ead{charles.a.kamhoua.civ@army.mil},
\author[Gainesville]{Jie Fu}\ead{fujie@ufl.edu}               

\address[Gainesville]{Department of Electrical and Computer Engineering, University of Florida, Gainesville, FL 32611, USA}

\address[Adelphi]{U.S. Army Research Laboratory, Adelphi, MD 20783, USA}

\begin{keyword}
Stochastic games, Markov decision processes, formal methods, deception, sensor attacks. \end{keyword}

\begin{abstract}
In many security applications of cyber-physical systems, a system designer must guarantee that critical missions are satisfied against attacks in the sensors and actuators of the CPS. Traditional security design of CPSs often assume that attackers have complete knowledge of the system. In this article, we introduce a class of deception techniques and study how to leverage asymmetric information created by deception to strengthen CPS security.  Consider an adversarial interaction between a CPS defender and an attacker, who can perform sensor jamming attacks.  To mitigate such attacks, the defender introduces asymmetrical information by deploying a ``hidden sensor," whose presence is initially undisclosed but can be revealed if queried. 
We introduce hypergames on graphs to model this game with asymmetric information. Building on the solution concept called subjective rationalizable strategies in hypergames, we identify two stages in the game: An initial game stage where the defender commits to a strategy perceived rationalizable by the attacker until he deviates from the equilibrium in the attacker's perceptual game; Upon the deviation,  a delay-attack game stage starts where the defender plays against the attacker, who has a bounded delay in attacking the sensor being revealed.
Based on   backward induction, we develop
an algorithm that determines, for any given state, if the defender can benefit from hiding a sensor and revealing it later.  If the answer is affirmative,  the algorithm outputs a sensor revealing strategy to determine when to reveal the sensor during dynamic interactions. 
 We demonstrate the effectiveness of our deceptive strategies through two case studies related to CPS security applications.          
\end{abstract}
\end{frontmatter}

\section{Introduction}
With the integration of physical and advanced communication networks, intelligent and autonomous cyber-physical systems (CPSs) are increasingly assigned with mission-critical tasks in dynamic, uncertain environments \cite{dibaji2019systems, lin2017survey}. However, the interconnection between networks, sensors, and (semi-)autonomous systems introduces unprecedented vulnerabilities to both cyber and physical spaces \cite{dibaji2019systems,cardenas2008research,parkinson2017cyber,lun2019state}. 
Intentional attacks on a CPS sensor and actuator network can threaten the intricate attributes of the system \cite{cardenas2008research},  beyond traditional safety and stability. A notable example is the Iranian military's capture of a US RQ-170 drone through a sensor spoofing attack \cite{rawnsley2011iran}. A recent long-duration GPS jamming attack disrupted global satellite navigation systems for hundreds of aircraft in the Baltic region in Europe \cite{NewScientist}. Similarly, the increase in GPS spoofing and jamming incidents in the Mediterranean and Black Seas has impacted over a hundred cargo-carrying vessels \cite{LloydsList}.    



Researchers have thus extensively investigated the vulnerabilities in sensors, including spoofing attacks on LiDAR \cite{cao2019adversarial}, satellites \cite{rutkin2013spoofers},  and gyroscopic sensors  \cite{son2015rocking}.
Beyond spoofing, jamming has also emerged as a significant threat to sensor networks, due to its ease of execution and difficulty in detection \cite{mpitziopoulos2009survey,pirayesh2022jamming}. 
To strengthen the security of CPSs against such vulnerabilities, researchers have been developing a range of defense mechanisms, including game-theoretic approaches (see a recent survey \cite{tushar2023survey}), control theory approaches \cite{lun2019state,oliveira2023classification}, and learning-based methods \cite{wickramasinghe2018generalization}. 

In this work, we develop a game-theoretic approach to synthesize a provably secured CPS against sensor jamming attacks, leveraging the power of deception. Traditionally, control design for secured CPSs assume attackers possess complete knowledge of the system with possible imperfect observations, enabling a ``worst-case" scenario but limiting to conservative countermeasures. 
Deception hinges on relaxing the assumption of complete and symmetric information between the system defender and the attacker. Particularly, if an adversary has incomplete knowledge of the sensing or actuation capabilities, then there is an opportunity for the CPS controller to exploit the attacker's lack of information to achieve mission performance despite of attacks.  To this end, we model the CPS defender and attacker interactions using an incomplete and imperfect information game, where the system, referred to as player 1 or P1, must satisfy a reachability objective (\ie, reaching a set of goal states) with partial observations obtained from the sensor network. During the mission execution, the attacker, referred to as player 2 or P2, aims to prevent P1 from achieving his task by carrying out sensor attacks on a subset of vulnerable sensors. However, P2 lacks full knowledge of the system's sensing capabilities. We explore counterattack strategies with a class of capability deception: Could a strategic deployment of a ``hidden sensor" (a sensor whose presence is unknown to the adversary) offer a decisive advantage to the CPS in countering sensor attacks and ensuring reliable performance? If the hidden sensor would be revealed to the attacker on being queried by P1, when should  P1  query the hidden sensor to create the maximal strategic advantages?  

In literature, Bayesian games have been used to capture such interaction between players, when players have incomplete information \cite{harsanyi1967games}. In Bayesian games, each player has access to some private information that is encapsulated as a ``type". Each player defines a distribution over the possible types of the opponent, and these types are assumed to be public knowledge.  With this assumption,  one can transform a game with incomplete information into a game with imperfect information. However, in our context, we do not assume the attacker has constructed a finite set of possible hidden sensors deployed, preventing the modeling of Bayesian games.
Another commonly employed approach is to model the interaction between P1 and P2 using hypergames \cite{bennett1986hypergame, house2010hypergame}. A hypergame is a hierarchy of perceptual games, each of which models one player's subjective view of the interaction with others given his information and high-order information (what he knows that the other knows ...). 
We extend the hypergame model to games on graphs for analyzing the defender-attacker interactions with imperfect and incomplete information.  Since P1 knows P2's incomplete information, P1 can construct P2's perceptual game and use it to construct P1's own subjective game including the true interaction dynamics and P2's misperceived interaction dynamics. The hierarchy of information leads to a hierarchy in perceptual games.

In particular, the hypergame evolves with information leakage during their interaction: 1) If P1 deviates from the equilibrium in P2's perceptual game without using a hidden sensor, then P2 would infer that P2's game model is incorrect but does not know the true game; 2) If P1 deviates from the equilibrium by revealing the hidden sensor, then P2 would not only infer that P2's game model is incorrect but also know the true game. Given a qualitative objective for P1 is to achieve the task with probability one, our analysis shows that there is no benefit for P1 to deviate without revealing the hidden sensor. Further, to synthesize a deceptive sensor revealing strategy that capitalizes on P2's incomplete knowledge,  we decompose the interaction between P1 and P2 into two parts: the \emph{initial game} and the \emph{delay-attack game}, as shown in Figure \ref{fig:hidden_sensor_conceptual_diagram}. The initial game captures the interaction before the revelation of the hidden sensor, while the delay-attack game represents the interaction afterward. We employ backward induction, where the solution of the delay-attack game is used to define the objective for the initial game. And the solution of the initial game determines P1's sensor-revealing strategy.

\begin{figure}
    \centering
    \includegraphics[width=0.9\linewidth]{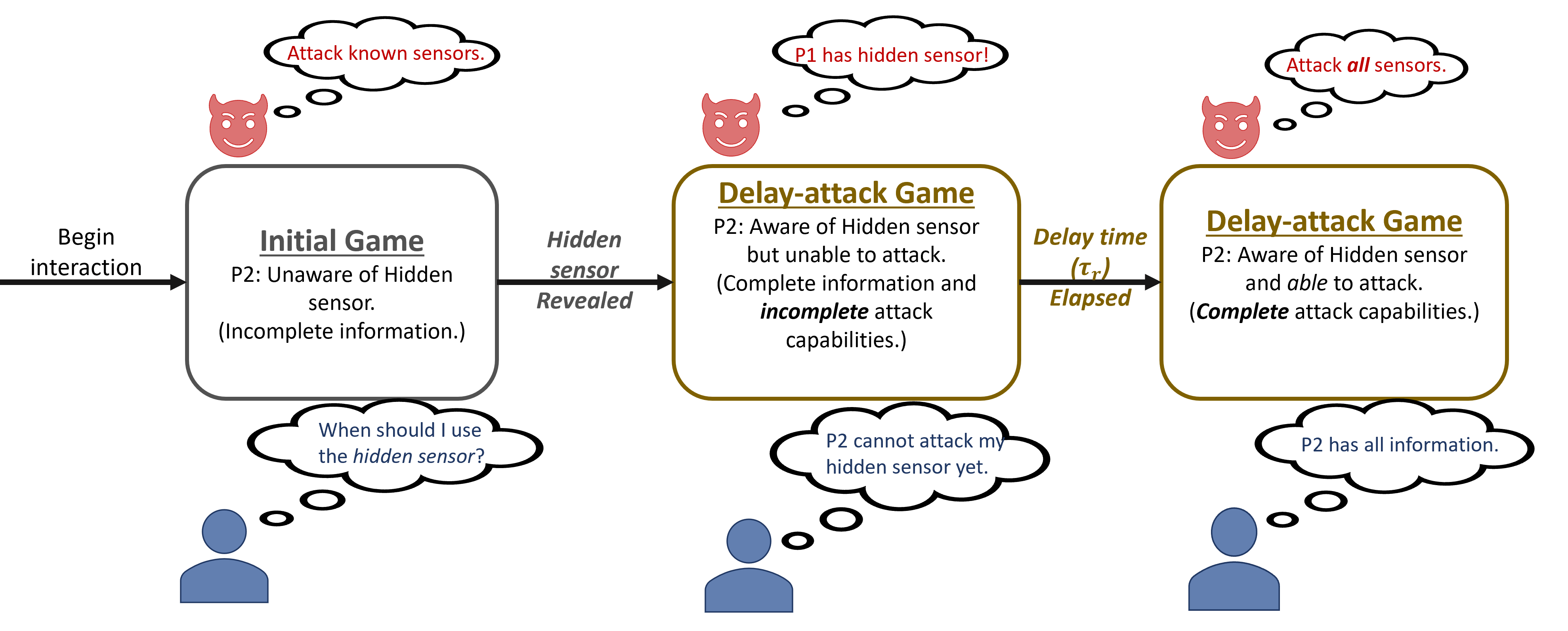}
    \caption{Interaction between P1 (indicated in blue) and P2 (indicated in red) with P1 having a hidden sensor.}
    \label{fig:hidden_sensor_conceptual_diagram}
\end{figure}



Our work is closely related to the problem of supervisory control for discrete event systems (DES), particularly in the context of sensor deception and actuator attacks. There has been extensive research on sensor attacks in the literature on supervisory control of discrete event systems \cite{hadjicostis2022cybersecurity,oliveira2023classification}.
In the domain of sensor deception using supervisory control, two main directions of research have been investigated. One is to design stealthy sensor deception attacks from the attacker's perspective \cite{meira2020synthesis, 8264281, 8814740, 9477153, lin2020synthesis, tai2021synthesis} and the other is to design a sensor deception attack resilient supervisor from the defender/system's perspective \cite{lin_bounded_resilient_sup_2019, romulo_resilient_sup_2019, romulo_robust_sup_2021, su2022existence, wang2019supervisory}. Primarily, in deterministic and stochastic DES literature, sensor deception is defined as any attack on the sensor/sensor information that the supervisor receives and thus is modeled as editing of observation strings. As mentioned earlier, a general assumption in the literature of sensor deception is that the attacker has complete knowledge of the plant and the supervisor, and undertakes attacking actions to deceive the supervisor \cite{oliveira2023classification}. The attack-resilient supervisor is synthesized by considering an augmented plant that captures the interaction with the attacker. The assumption of the attacker having complete knowledge could be considered overly restrictive and it may not always align well with the real-world situations \cite{oliveira2023classification}.


In the literature on deception, studies such as \cite{karabag2021deception, kulkarni2019signaling, li2022dynamic} focus on intention/payoff deception where the two players have different perceptions of the opponent's intention.  Capability deception has been studied for both two-player turn-based deterministic games \cite{kulkarni2021synthesisdeception} and concurrent stochastic games \cite{shi2023quantitative}.  Both works consider the setting where P1 has a hidden action and decides when to employ the hidden action to exploit P2's initial incomplete knowledge about the game. They considered games with perfect observations instead of partial observations. In our setting of control against sensor attacks, the defender, P1, has partial observations which are partially controlled by P2, who carries out jamming attacks. The solution approach is thus different from the aforementioned capability deception with perfect observations.  This work is built on our previous work \cite{lcss_paper} on the reactive synthesis of attack-aware controllers with active perception and control. Instead of assuming both players have complete information about the game dynamics, we introduce asymmetric information and develop joint active perception and control that leverages capability deception.


Our contributions in this work are concluded as follows. \begin{enumerate}
    \item \textbf{Hypergame modeling  deception with a hidden sensor:} We begin by formulating the interaction between P1 and P2 as a zero-sum stochastic reachability hypergame on graphs, where
   P1's observations are jointly controlled by active sensing and  P2's sensor attacks, creating a dynamic observation function. 
    \item \textbf{Algorithm for synthesizing deceptive strategies:}  We introduce a delay in P2's reaction to the revelation of the hidden sensor and utilize the hypergame structure to build an augmented game, expanding the planning state space to include the true state, P1's belief, and the delay in P2's reaction upon the revelation of the hidden sensor. Leveraging the concept of subjectively rationalizable strategies \cite{subjective_rationalizability_paper}, we model the attacker's response given evolving knowledge about P1's sensing capability. Through backward induction, we decide for each possible initial state, whether there exists a deceptive strategy for P1 that guarantees to satisfy the reachability objective with probability one.
    \item \textbf{Theoretical characterization of deception advantages with hidden sensor:} We delve into theoretical analysis of P1's strategic advantage with deception and prove that for qualitative objective (winning almost surely), there is no advantage to deviate without revealing the hidden sensor. We formally define the value of deception to quantify P1's strategic advantage.
    \item \textbf{Experimental demonstrations:}  We analyze the value of deception in revealing the hidden sensor at each state and demonstrate the effectiveness of the developed methods in the simulation of two different experimental scenarios.
    
\end{enumerate}

The remainder of the paper is organized as follows. 
Section \ref{sec:preliminaries_and_problem_formulation} introduces the necessary preliminaries and problem formulation along with the introduction of a running example. Section \ref{sec:Main_results} discusses the main results of the paper with the construction of the augmented games extending hypergame model, main theoretical results, and synthesis of deception strategy. In Section \ref{sec:Experimental_case_studies} we demonstrate the developed theory in two different scenarios. Section \ref{sec:Conclusion} concludes and discusses future work on leveraging deception for CPS security.

\section{Preliminaries and Problem Formulation}
\label{sec:preliminaries_and_problem_formulation}

\paragraph*{Notations} We use the notation $\mathcal{A}$ for a finite set of symbols, also known as alphabet. A sequence of symbols $w = w_0 w_1 \cdots w_n$ with $w_i \in \mathcal{A}$ for any $0 \leq i \leq n$ is called a finite word, and $\mathcal{A}^*$ is the set of all finite words that can be generated with the alphabet $\mathcal{A}$. The empty word or the empty sequence is denoted by $\epsilon$. We denote the set of all $\omega$-regular words as $\mathcal{A}^{\omega}$ obtained by concatenating the elements in $\mathcal{A}$ infinitely many times. The length of a word is given by $|w|$. For a word $w=uv$, $u$ is a prefix of $v$ for $u \in \mathcal{A}^{*}$ and $v \in \mathcal{A}^{\omega}$. Given a finite set $X$, let $\dist(X)$ be the set of all probability distributions over $X$. Given a distribution $d\in \dist (X)$, let $\supp(d) = \{x\in X\mid d(x)>0\} $ be the support of this distribution. We also have the powerset of $X$ as $2^X$. 

We introduce the class of two-player stochastic reachability game with active sensing and sensor attacks. 
In this game, an autonomous agent (Player 1/defender), actively queries sensors to obtain task-relevant information. 
Meanwhile, an attacker (Player 2), seeks to compromise P1's mission by reactively attacking information requested by P1.
\\
\begin{definition}[Stochastic Reachability Game with Active Sensing and Sensor Attacks]
	\label{def:cps-game}
The \emph{two-player stochastic game with active sensing and sensor attacks} is a tuple
\[
G = (S, A_1 ,\Sigma, \SA_1, \SA_2, \mathbf{P}, O, \obs, s_0, o_0, F)
\]
in which

\begin{itemize}
\item $S$ is a finite set of states.
 \item $A_1$ is the set of P1's \emph{control actions}.
 \item $\Sigma = \{\sigma_0,\sigma_1,\cdots,\sigma_N\}$ is a set of \emph{indexed sensors}.
 \item Each sensor query action for P1 refers to a subset of sensors, $\SA_1 \subseteq 2^{\Sigma}$, queried and each sensor attack action for P2 refers to subset of sensors, $\SA_2 \subseteq 2^{\Sigma}$, being attacked.
 \item $\mathbf{P}: S\times A_1 \rightarrow \dist(S)$ is a \emph{probability transition function} such that for each $s, s' \in S$ and $a \in A_1$, $\mathbf{P}(s,a,s')$ is the probability of reaching $s'$ given action $a$ taken at state $s$.
 \item $O \subseteq 2^S$ is the set of all \emph{observations}.
 \item $\obs: S\times \SA_1\times \SA_2\rightarrow O$ is the \emph{ observation function} such that $\obs(s,\sa_1,\sa_2)\in O$ is the observation P1 perceives when the current state is $s$, P1  performs sensing action $\sa_1$ and P2 takes a sensor attack action $\sa_2$. In this game, P2 is assumed to have a perfect observation of states and P1's actions, thereby P2's observation function is omitted. 
 \item $s_0 \in S$ is the initial state.
 \item $o_0 \in O$ is the initial observation and $s_0 \in o_0$.
 \item $F \subseteq S$ is P1's \emph{reachability objective}---a set of \emph{goal} states that P1 intends to reach.
\end{itemize}
\end{definition}

P1's observation function is defined as follows: At each state $s\in S$, with queried sensors by P1, $\gamma_1 \in \SA_1$, and attacked sensors by P2, $\gamma_2 \in \SA_2$, P1 receives the sensor readings from the sensors in $\gamma_1\setminus \gamma_2$ about the state $s$, which is observation $\obs(s, \gamma_1,\gamma_2)$. We assume that the observation function is deterministic. Two states $s, s' \in S$ are said to be \emph{observation equivalent given the sensing action $\gamma_1$  and sensor attack action $ \gamma_2$} if $\obs(s,\gamma_1,\gamma_2)= \obs(s',\gamma_1,\gamma_2)$.



In contrast to the standard POSG models \cite{chatterjee2007algorithms} where the observation functions are fixed, in our game, the observation function of P1 is determined dynamically during every round by P1's active sensing and P2's sensor attacks. 
    %
    %
    

%
%
%

\paragraph*{Game Play} The game play in $G$ is constructed as follows. The initial state $s_0$ and the initial observation of P1 is $o_0$. 
Based on the observation, P1 selects a control action $a_0 \in A_1$ and a sensor query action $\gamma_1 \in \SA_1$. The system moves to state $s_1$ with probability $P(s_0, a_0, s_1)$. P2 perfectly observes the state $s_1$ and P1's sensor query action $\gamma_1$,  selects an attack action $\gamma_2 \in \SA_2$. The system generates a new observation $o_1 = \obs(s_1,\gamma_1,\gamma_2)$ for P1, and then P1 selects another control action, and then the game continues.  
We denote the resulting play as $\rho = s_0 (a_0, \gamma_{1}^0)  s_1  \gamma_{2}^0 (a_1, \gamma_{1}^1)  s_2 \gamma_{2}^1 \ldots$. 
The set of plays in $G$ is denoted by $\plays(G)$. A prefix of a play $\rho$ is a sub-sequence of states and actions $\upsilon = s_0 (a_0, \sa_1^0) s_1 \cdots s_k$, $k \ge 0$,   and the set of prefixes of plays in $G$ is denoted by $\prefplays(G)$. An infinite play $\rho   = s_0 (a_0, \gamma_{1}^0)  s_1  \gamma_{2}^0 (a_1, \gamma_{1}^1)  s_2 \gamma_{2}^1 \ldots$ is \emph{winning} for P1 if $s_k\in F$ for some $k\ge 0$. Otherwise, it is winning for P2.

\paragraph*{Observation Equivalent Plays to P1} Given a play $\rho   = s_0 (a_0, \gamma_{1}^0)  s_1  \gamma_{2}^0 (a_1, \gamma_{1}^1)  s_2 \gamma_{2}^1 \ldots$,  P1's \emph{observation} of  $\rho$ is $ \rho^o= o_0 (a_0, \gamma_{1}^0,\gamma_{2}^0) o_1 (a_1, \gamma_{1}^1, \gamma_{2}^1)\ldots $ where $o_{i+1}=\obs(s_{i+1}, \gamma_{1}^i, \gamma_{2}^i)$ for all $i\ge 0$ and $o_0$ is the initial observation. For notation convenience, we denote the observation of play $\rho$ as $\obs(\rho)$.
Two plays (or play prefixes) $\rho, \rho'$ are said to be observation-equivalent to P1, denoted by $\rho \sim \rho'$, if and only if $\obs(\rho)=\obs(\rho')$. 

\paragraph*{Players' Strategies}   
We denote P1's set of perception-control actions by $\act_1  = A_1 \times \SA_1$, and that of P2 by $\act_2= \SA_2$.
 A finite-memory, randomized (resp., deterministic) strategy for player $j \in \{1, 2\}$ is a function $\pi_j : \prefplays(G) \rightarrow \dist(\act_j)$ (resp., $\pi_j : \prefplays(G) \rightarrow \act_j$). 
 Player $j$ is said to follow strategy $\pi_j$ if for any prefix $\rho\in \prefplays(G)$ at which $\pi_j$ is defined, player $j$ takes the action $\pi_j(\rho)$ if $\pi_j$ is deterministic, or an action $a \in \supp(\pi_j(\rho))$ with probability  $\pi_j(\rho,a)$ if $\pi_j$ is randomized. $\Pi_j$ denotes the set of strategies of player $j$. 
A strategy is said to be \emph{observation-based} if $\pi_j(\rho) = \pi_j(\rho')$ whenever $\rho \sim \rho'$. P1's observation-based strategies are denoted by $\Pi_1^o$. A pair $(\pi_1,\pi_2)$ of players' strategies is called a \emph{strategy profile}.

The set of all possible plays starting from a state $s\in S$, with P1's strategy $\pi_1$ and P2's strategy $\pi_2$ in the game G is denoted by $\mathsf{Outcomes}_G(s,\pi_1,\pi_2)$. Formally, $\mathsf{Outcomes}_G(s,\pi_1,\pi_2) = \{\rho\in \plays(G) \mid \Pr(\rho; G^{\pi_1,\pi_2})>0\}$ where $\Pr(\rho; G^{\pi_1,\pi_2})$ is the probability of a play $\rho$ in the stochastic processes $G^{\pi_1,\pi_2}$ induced by the strategy profile $(\pi_1,\pi_2)$.

 \paragraph*{Almost-Sure Winning Strategy/Region} Given a reachability objective $F$, a strategy $\pi_1 \in \Pi_1$ is said to be \emph{almost-sure winning} for P1 if, for any strategy of P2, $\pi_2 \in \Pi_2$, P1 is guaranteed to visit $F$ with probability one. A strategy $\pi_2 \in \Pi_2$ is said to be \emph{almost-sure winning} for P2  if, for any strategy of P1, $\pi_1 \in \Pi_1$, P1 is guaranteed to visit $F$ with probability zero. 
 A set of prefixes of plays from which player $i$ has an almost-sure winning strategy is called player $i$'s \emph{almost-sure winning region},  denoted by $\asw_i(G)$ for $i \in \{1,2\}$. The almost-sure winning region for P1 can be obtained using Algorithm \ref{alg:posg-reachability} \cite{lcss_paper}, given in the Appendix \ref{appendix:almost_sure_winning_algo}.

\paragraph*{Positive Winning Strategy/Region} A strategy $\pi_1 \in \Pi_1$ is said to be \emph{positive winning} for P1 over a reachability objective to visit $F$ if, for any strategy of P2, $\pi_2 \in \Pi_2$, P1 has a positive probability to reach $F$. A set of prefixes of plays from which P1 has a positive winning strategy is called P1's positive winning region. The positive winning strategy/region for P2 is defined analogously. 

\subsection{Problem formulation of sensor-revealing game}
 
Under the threat of P2's sensor attack, P1 may benefit from having sensors unknown to P2, resulting in asymmetric information between players. Specifically, the following information structure is considered:
\begin{assumption}
\label{assumption:hidden_sensor}
Information structure:
\begin{itemize}
    \item 
P1 has a hidden sensor $\sigma_0 \in \Sigma$, \ie, this sensor is unknown to P2 initially. 
\item P2 has perfect observations of game play. That is, P2 observes states, P1's control and sensing actions, and also knows her attack actions.
\end{itemize}
\end{assumption}

Informally, for a given prefix of a game play, we say that P1 has \emph{qualitative strategic advantages} with a hidden sensor if the following conditions are satisfied: 1) without the hidden sensor, P1 cannot achieve his objective with probability one; 2) when P2 knows all sensors in the beginning, P1 cannot achieve his objective with probability one; 3) with the hidden sensor unknown to P2 initially, P1 can achieve his objective with probability one, by strategically deciding when to reveal his hidden sensor.

We introduce delay in P2's response after P1 reveals his hidden sensor to capture the three cases: 1) P2 can attack any sensor immediately when it is public knowledge (zero delay); 2) P2 can attack a revealed sensor after some delay (positive delay); 3) P2 cannot attack a revealed sensor (delay equals to infinite). This delay is motivated by the need to develop a technical approach and by practical applications, 
for instance, P2 may take some time to physically install jamming/attacking systems near the newly revealed sensor.

\begin{definition}[Delay in reaction]
    Let $t_1$ be the time step when P2 detects the hidden sensor, and $t_2$ be the time step when P2 can initiate attacks on the hidden sensor. A \emph{delay in reaction} is $\tau_r = t_2 - t_1$. 
\end{definition}

Trivially, we have that $\tau_r\ge 0$. We also assume that the delay is bounded. Otherwise, P2 knows the sensor but can never attack it. There is no need for P1 to hide this sensor.

The following assumption is made:
\\
\begin{assumption}
Delay in reaction:
 \begin{itemize}
     \item P2's delay in reaction is upper bounded by a finite maximal delay $\tau_r \le \overline{\tau}_r<\infty$.
  \item 
P1 knows P2's delay in reaction ($\tau_r$).
  \end{itemize}
\end{assumption}

Thus, we pose the problem of strategically revealing the hidden sensor. 
\\
\begin{problem}
Given the game $G$ in Def. \ref{def:cps-game}, a sensor $\sigma_0$ hidden from P2, and a delay in reaction $\tau_r \ge 0$, for a given $\rho \in \prefplays(G)$ from which P1 has no almost-sure winning, observation-based strategy when P2 has complete information of P1's sensors, determine if P1  has an observation-based strategy to ensure satisfying the reachability objective with probability one if P1  has sensor $\sigma_0$ hidden from the beginning and reveals the hidden sensor strategically. 
\end{problem}

We illustrate our problem setup and solution approaches using a running example. 
\\
\begin{example}  
\label{example:running_example}
Consider a simple CPS game depicted in Fig.~\ref{fig:running_example_1}. The game states are $S= \{s_i, i=1,2,3,4,5\}$
and P1 has two actions $A_1 = \{a,b\}$.
For clarity, the exact transition probabilities are omitted. From state $s_1$, with either action $a$ or $b$, P1 can reach either state $s_2$ or $s_3$ with a positive probability.
 The set of sensors deployed is $\{A, B, C, D\}$, which are range sensors. The sensor $A$ covers the states $\{s_2, s_3\}$, sensor $B$ covers the state $\{s_3\}$, sensor $C$ covers the states $\{s_4, s_5\}$ and the sensor $D$ covers the states $\{s_2, s_3, s_4\}$. This sensor coverage is shown in different colors in  Fig.~\ref{fig:running_example_1}. If the current state is in the sensor coverage, the sensor outputs $1$, $0$ otherwise. P1 is allowed to query any two of the sensors at a time and P2 is allowed to attack any one of the sensors.


 Among the sensors, sensor B is hidden and unknown to P2 at the start of the game. P1 aims to reach the final winning state $s_5$. The state $s_4$ is a sink state. In this example, when P1 does not query the hidden sensor, the states $\{s_2, s_3, s_5\}$ are in P1's almost-sure winning region. And $\{s_1\}$ is in P1's positive winning region. This is because, from state $s_1$, no matter whether P1 takes action $a$ or $b$, P1 cannot distinguish whether the reached state is $s_2$ or $s_3$ and thus cannot ensure to reach $s_5$ with probability one. If the actual state is $s_3$, choosing action $b$ would lead to the sink state $s_4$, and if the actual state is $s_2$, choosing action $a$ would lead to the sink state.
  
When P1 is allowed to query the hidden sensor $B$ and P2 is aware of the sensor $B$, it turns out that P1's almost-sure winning region is the same set $\{s_2,s_3,s_5\}$. This is because P2 can attack the sensor to reach the belief state $\{s_2,s_3\}$ from which P1 has no almost-sure winning actions.
We will later use this example to construct P1's deceptive, sensor reveal strategy and show the advantages of a hidden sensor $B$.
\end{example}

\begin{figure}[ht!]
        \centering       
       \begin{tikzpicture}[->,>=stealth',shorten >=1pt,auto,node distance=2.5cm,scale=0.7,semithick, transform shape]
    \tikzstyle{every state}=[fill=black!10!white];


    \node[state, initial] (1) at (-3, -0.5) {$s_1$};
    \node[state] (2) at (0, 1) {$s_2$};
    \node[state] (3) at (0, -1.5) {$s_3$};
    \node[state] (4) at (3, 1) {$s_4$};
    \node[state, accepting] (5) at (3, -1.5) {$s_5$};

    \node [container,fit=(2) (3),draw=red,dashed,line width=0.2mm ] (container) {};
    \node [container,fit=(4) (5),draw=green,dashed,line width=0.2mm ] (container) {};
    \draw [rotate=0,draw=blue, dashed,line width=0.5mm] (0, -1.5) ellipse (0.60cm and 0.60cm);
    \draw [draw=violet, line width =0.4mm, dashed] 
(-0.5, 1.5) 
-- (4, 1.5)
-- (-0.5, -3) 
-- cycle;

      \path 
        (1) edge   node[pos=0.3]{$a, b$} (2)
        (1) edge   node[pos=0.3]{$a, b$} (3)
        (2) edge   node[pos=0.3]{$a$} (4)
        (2) edge   node[pos=0.3]{$b$} (5)
        (3) edge   node[pos=0.3]{$b$} (4)
        (3) edge   node[pos=0.3]{$a$} (5)
        (4) edge[loop above]   node{$a, b$} (4)
        (5) edge[loop below]   node{$a, b$} (5)
        ;

\end{tikzpicture}
        \caption{An illustrative running example. The dashed lines represent the sensors: A (red), B (blue), C (green) and D (purple).
        }
      \label{fig:running_example_1}
\end{figure}
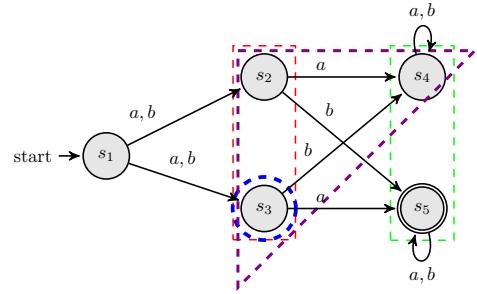

\section{Main results}
\label{sec:Main_results}

 In this section, we first extend the hypergame model \cite{bennett1980hypergames, house2010hypergame} to capture the asymmetric information between players. Then, based on the solution concept of hypergames, we define the construction of a sensor-revealing game, whose solution provides an almost-sure winning strategy for P1 using sensor deception, against the best response strategy of P2 given P2's subjective view of the interaction. 


\subsection{Hypergame Model}
\begin{definition}[Hypergame \cite{bennett1980hypergames}]
A level-1 two-player hypergame is a pair $\mathcal{HG}^1 = \langle G_1, G_2 \rangle$, where $G_1$ and $G_2$ are games perceived by players $P1$ and $P2$, respectively. A level-2 two-player hypergame is a pair $\mathcal{HG}^2 =  \langle \mathcal{HG}^1, G_2 \rangle$, where P1 perceives the interaction as a level-1 hypergame and P2 perceives it as game $G_2$. The first component of the hypergame is called the perceptual game of P1 and the second component is called the perceptual game of P2.  
\end{definition}

While it is possible to have a level-$k$ hypergame, it is sufficient to have a level-2 hypergame to model the interaction in the current problem setting.

$\\$
\begin{definition}[P2's perceptual game]
\label{def:P2-game}
Given the stochastic game $G = (S, A_1 ,\Sigma, \SA_1, \SA_2, \mathbf{P}, O, \obs,\\ s_0, o_0, F) $ and that sensor $\sigma_0\in \Sigma$ is hidden from P2, 
P2's perceptual game is a tuple 
\[
G^2 = (S, A_1, \Sigma \setminus \{\sigma_0\},
\SA_1^2 \cup \SA_2^2, P, O, \obs^2, s_0, o_0, F)
\]
where
\begin{itemize}
    \item P1's sensor query actions known to P2 is $\SA_1^2 = \{\sa_1 \mid \sa_1 \in \SA_1$ and $\sigma_0 \notin \sa_1\}$ and P2's sensor attack actions known to P2 is $\SA_2^2 = \{\sa_2 \mid \sa_2 \in \SA_2$ and $\sigma_0 \notin \sa_2\}$,
    \item $\obs^2: S\times \SA_1^2\times \SA_2^2\rightarrow O$ is P1's observation function known to P2.
\end{itemize}
The other components are the same as in the game $G$.
\end{definition}

Given P2's perceptual game known to P1, the interaction between two players is characterized by a level-2 hypergame. Next, 
We extend the hypergames from normal-form games \cite{bennett1986hypergame, bennett1980hypergames,vane2006advances} to the two-player stochastic game with a hidden sensor.
\begin{definition}\label{def:static_hypergame_1}
P1's perceptual game is a level-1 hypergame  defined as \[{\hgame}^1 = \langle G, G^2 \rangle,\] where $G$ is P1's knowledge about the actual game (see Def.~\ref{def:cps-game}) and $G^2$ is P2's perceptual game (see Def.~\ref{def:P2-game}). The interaction between P1 and P2 is a level-2 hypergame: \[
    {\hgame}^2  = \langle {\hgame}^1, G^2 \rangle.
  \] 
 \end{definition}
    

 \subsection{Synthesis of deceptive almost-sure winning strategies}

 To investigate how P1 can leverage the effect of deception with a hidden sensor, and P2's delay in reaction to achieve P1's objective, we construct a 
 \emph{sensor-revealing game}, a two-player stochastic game whose states are augmented with P1's and P2's beliefs, capabilities, and knowledge. 

 To facilitate the definitions, we introduce the function $\post_G:2^S \times A_1 \rightarrow 2^S$ that maps a set of states $B \subseteq S$ and an action $a \in A_1$ to the possible reachable states with the action $a$, $\post_G(B, a) = \{s' \in S \mid \exists s \in B : \mathbf{P}(s, a,s') > 0\}$. In the rest of the paper, we denote $\post_G(\{s\}, a)$ as $\post_G(s, a)$.
\\
 
 \begin{definition}[Sensor-Revealing Game]
 \label{def:sen_rev_game}
Given the stochastic game $G$ and P2's perceptual game $G^2$, with the delay in reaction $\tau_r$, the stochastic sensor-revealing game is a tuple

    	\[
	\mathcal{H} = \langle Q \cup \{q_F\}, (A_1 \times \SA_1) \cup \SA_2, \delta,\obs,\tau_r, q_0  \rangle,
	\]
where,
\begin{itemize}

    \item $Q = Q_1 \cup Q_N \cup Q_2 $, is the set of states consisting of P1, nature's states and P2's states. $Q_1 = \{(s,B,\detected) \mid s \in S, B \subseteq S, \detected \in \{-1,0,1,\cdots, \tau_r\} \}$ is the set of states where P1 selects a (control and sensing) action $(a, \sa_1)$. $Q_N = \{(s,B,a,\sa_1,\detected) \mid s \in S, B \in 2^S, (a,\sa_1) \in \act_1, \detected \in \{-1,0,1,\cdots, \tau_r\} \}$ is the set of nature's state. $Q_2 = \{(s,B,\sa_1,\detected) \mid s \in S, B \in 2^S, \sa_1 \in \SA_1, \detected \in \{-1,0,1,\cdots, \tau_r\} \}$ is the set of states where P2 selects a sensor attack action. 
   

    \item $q_F$ is the fictitious final state. It is a sink state. 
    
    \item $\act_1 =A_1\times \SA_1$ is a set of P1's actions and $\act_2= \SA_2$ is a set of P2's actions.
    \item $q_0 = (s_0,B_0,\detected=-1)$ is the initial state.

    \item $\delta: (Q_1 \times \act_1) \cup Q_N \cup (Q_2 \times \act_2) \rightarrow \dist(Q \cup \{q_F\})$ is the probabilistic transition function  defined as follows:
    \begin{enumerate}
    %
			\item For a P1's  state $(s, B, \detected) \in Q_1$, there are following cases: \begin{enumerate}[1)]
			    \item When an action $(a, \sa_1) \in \act_1^2$ and $\detected = -1$, \ie, an action from the set of P1's actions known to P2 is taken and the hidden sensor has not been detected,   $\\ \delta((s,B,\detected),(a, \sa_1),(s,B',a,\sa_1,\detected')) = 1$, where $B'= \post_G(B,a)$ and $\detected' =\detected $.
			    
			    \item When P1 chooses to query the hidden sensor, $(a, \sa_1) \in \act_1\setminus \act_1^2$ or $-1 < \detected < \tau_r$, \ie,  P2 knows the hidden sensor and cannot attack it,\\ $\delta((s,B,\detected),(a, \sa_1),(s,B',a,\sa_1,\detected')) = 1$.
			    Where $B'= \post_G(B,a)$ and  $\detected' = \detected+1$.
			    \item When $\detected = \tau_r$ \ie,  P2 has the ability to attack the hidden sensor, P1 selects any action $(a,\sa_1) \in \act_1$,  $\delta((s,B,\detected),(a, \sa_1),(s,B',a,\\\sa_1,\detected')) = 1.$
			    Where $B'= \post_G(B,a)$ and    $\detected'=\detected$.
			\end{enumerate}  
			\item For a nature's state $(s,B',a,\sa_1,\detected) \in Q_N$, we distinguish three cases:

   \begin{enumerate}[a)]
    		\item If $\post_G(s,a)\subseteq F$ then $\delta((s, B',a,\sa_1,\detected),\\q_F)=1$.  
    		\item If 	$\post_G(s,a)\cap F =\emptyset$, then   $\delta((s,B',a,\sa_1,\detected),\\ (s',B',\sa_1,\detected)) =P(s,a,s')$. 
    		\item If $\post_G(s,a)\cap F\ne \emptyset$ and $\post_G(s,a) \setminus F \ne\emptyset$, then, $\delta((s,B',a,\sa_1,\detected),q_F) =\epsilon$ and $\delta((s,B',a,\sa_1,\detected), (s',B',\sa_1, \detected)) =(1-\epsilon)\cdot P(s,a,s')$ where   $\epsilon \in (0,1)$ is a small constant. 
    			 That is, with some positive probability $\epsilon$, the final state $q_F$ is reached.  
			 \end{enumerate}

			\item For a P2's state $(s',B',\sa_1,\detected) \in Q_2$, we have the following two cases:
			\begin{enumerate}[1)]
			    
			    
			    \item When $-1 \le \detected < \tau_r$, 
       P2 chooses a sensor attack action $\sa_2 \in \act_2^2$,  $$\delta((s',B',\sa_1,\detected),\sa_2,(s', B'',\detected')) = 1$$ where $B'' =B'\cap \obs(s', \sa_1,\sa_2)$ and $\detected' = \detected$.

			    \item When $\detected \geq \tau_r$, \ie,  the hidden sensor is detected and P2 is capable of attacking it, P2 chooses an attack action  $\sa_2 \in \act_2$,  $\delta((s',B',\sa_1,\detected),\sa_2,(s', B'',\detected')) = 1$ where $B'' =B'\cap \obs(s', \sa_1,\sa_2)$ and 
			    $\detected'= \detected$.
			  
			\end{enumerate}
		\end{enumerate}
\end{itemize}

 \end{definition}
 
 A sequence of transitions $(s,B,\detected) \xrightarrow{(a,\sa_1)} (s, B', a,\sa_1,\\\detected') \dashrightarrow (s',B', \sa_1, \detected')\xrightarrow{\sa_2} (s',B'',\detected')$ is understood as follows: At the P1 state $(s,B,\detected)$, the true state is $s$ and P1 believes that any state in $B$ can be the true state and different values of $\detected$ represents if the hidden sensor has been revealed to P2 ($\detected> -1$) or not ($\detected = -1$). If it is revealed($\detected> -1$), then  if P2 is capable of attacking the hidden sensor ($  \detected \geq \tau_r$) or not ($0\le \detected < \tau_r$). P1 selects a pair of control and sensing actions $(a,\sa_1)$ and updates $B$ to $B'$, which includes the set of states that may be reached if action $a$ is taken at some state in $B$. 
 If the sensing action chosen by the P1 includes the hidden sensor, \ie,  $\sa_1 \in \SA_1 \setminus \SA_1^2$, or if the hidden sensor was queried earlier and P2 has not gained the capability to attack the hidden sensor yet, the value of $\detected$ is increased by $1$. Then, the nature player makes a probabilistic transition (represented by the dashed arrow) to a new state $s'$ according to the stochastic transition dynamics. P2 observes the new state and then chooses a sensor attack action $\sa_2$. If the hidden sensor has not been queried before or if P2 is not capable of reacting to the hidden sensor (\ie, $\detected < \tau_r$), P2 chooses a sensor attack action from $\SA_2^2$. If  P2 has the capability to attack the hidden sensor, P2 chooses an attack action from $\SA_2$. With the sensor attack by P2, P1 observes $\obs(s',\sa_1, \sa_2)$ and updates P1's belief to eliminate states in P1's belief that are not consistent with the observation.

 The belief-augmented game allows us to compute observation-based strategy as a belief-based one.
\\
 \begin{definition}[Belief-based Almost-Sure Winning Strategy/Region]
Given a two-player belief augmented game $\mathcal{H}$, a strategy $\pi_1$ is belief-based provided that for two states $(s,B,\detected),(s',B',\detected) \in Q_1$, if $B=B'$, then, $\pi_1((s,B,\detected)) = \pi_1((s',B',\detected))$. A set of states from which P1 has a belief-based, almost-sure winning strategy is called P1's almost-sure winning region with partial observations.\\
\end{definition}

 We now show that by solving the sensor-revealing game $\mathcal{H}$ to reach the final state, we can obtain the P1 strategy to satisfy the objective against P2 in the original game $G$. 
\\
\begin{lemma}
\label{lemma:belief-based-win-H-sufficient}
A belief-based almost-sure winning strategy to reach $q_F$ in P1's sensor-revealing game $\mathcal{H}$ is also almost-surely winning for P1 to visit $F$ in the zero-sum game with partially controllable observations $G$.
\end{lemma}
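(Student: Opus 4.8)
The plan is to establish a play-level correspondence between the sensor-revealing game $\mathcal{H}$ and the original game $G$, and then transfer the almost-sure reachability guarantee along this correspondence. First I would define a projection map $\Phi$ that takes a play (or prefix) in $\mathcal{H}$ to a play in $G$: given a run $(s_0,B_0,-1) \xrightarrow{(a_0,\sa_1^0)} (s_0,B_0',a_0,\sa_1^0,\detected_0) \dashrightarrow (s_1,B_0',\sa_1^0,\detected_0) \xrightarrow{\sa_2^0} (s_1,B_1,\detected_0) \cdots$ in $\mathcal{H}$, strip away the belief and detection components to obtain $s_0 (a_0,\sa_1^0) s_1 \sa_2^0 (a_1,\sa_1^1) s_2 \cdots$, which by construction of $\delta$ (cases 2a–2c for nature, and the belief updates in cases 1 and 3) is a legitimate play in $G$ with the same transition probabilities, modulo the fictitious sink $q_F$. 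The key structural fact is the \emph{belief-soundness invariant}: for every reachable state $(s,B,\detected)$ in $\mathcal{H}$, we have $s \in B$, and moreover $B$ equals exactly the set of states consistent with P1's observation history along the projected play. This is proved by induction on the length of the run, using that $B' = \post_G(B,a)$ after P1 moves, $B'' = B' \cap \obs(s',\sa_1,\sa_2)$ after P2 moves, and $s_0 \in B_0 = o_0$.

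Next I would lift a belief-based almost-sure winning strategy $\pi_1^{\mathcal{H}}$ for P1 in $\mathcal{H}$ to an observation-based strategy $\pi_1^G$ in $G$. Since the invariant guarantees that the belief $B$ along any run is a function of the observation history $\obs(\rho)$, and $\pi_1^{\mathcal{H}}$ depends only on $B$ (and on $\detected$, which is itself determined by whether/when P1 has queried $\sigma_0$ — information P1 controls and hence observes), the induced strategy $\pi_1^G$ is well-defined on observation-equivalence classes and is therefore observation-based. Conversely, any P2 strategy $\pi_2^G$ in $G$ induces a strategy in $\mathcal{H}$ by reading off P2's move after the nature transition; here I must be careful that the attack-action restrictions in cases 3.1)–3.2) of $\delta$ (P2 confined to $\SA_2^2$ while $\detected < \tau_r$) only \emph{shrink} P2's power relative to $G$, so any P2 behavior feasible in $\mathcal{H}$ corresponds to a feasible behavior in $G$, which is the direction needed for a \emph{sufficiency} claim: if P1 wins against all P2 in $\mathcal{H}$, then in particular P1's projected strategy handles the (possibly larger) class of P2 behaviors — wait, this is the subtle point I flag below.

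The transfer of the almost-sure guarantee then goes as follows. Fix $\pi_1^G$ as above and an arbitrary P2 strategy $\pi_2^G$ in $G$. Under the information structure of Assumption \ref{assumption:hidden_sensor}, before P1 queries $\sigma_0$ the attacker cannot use $\sigma_0$ in an attack action anyway (she does not know it exists), so P2's effective action set in $G$ coincides with $\SA_2^2$ until revelation and expands to $\SA_2$ only after the reaction delay $\tau_r$ — which is precisely what $\mathcal{H}$ encodes via $\detected$. Hence $\pi_2^G$ pulls back to a legitimate P2 strategy $\pi_2^{\mathcal{H}}$ in $\mathcal{H}$, and the probability measures on plays satisfy $\Pr(\rho; G^{\pi_1^G,\pi_2^G}) = \Pr(\Phi^{-1}(\rho); \mathcal{H}^{\pi_1^{\mathcal{H}},\pi_2^{\mathcal{H}}})$ after accounting for the $\epsilon$-mass redirected to $q_F$. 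Since $\pi_1^{\mathcal{H}}$ reaches $q_F$ with probability one, and $q_F$ is reached only through nature states whose underlying transition enters $F$ (cases 2a, 2c), an easy argument — reaching $q_F$ almost surely forces the set of $G$-plays that visit $F$ to have probability one — completes the proof. The main obstacle I anticipate is exactly the matching of P2's strategy spaces across the two games: one must verify that the $\detected$-gated restrictions on $\SA_2$ in $\mathcal{H}$ are not artificially weakening P2 in a way that would make the implication vacuous, i.e., that every P2 strategy in $G$ respecting Assumption \ref{assumption:hidden_sensor}'s information structure and the delay bound does correspond to an admissible P2 strategy in $\mathcal{H}$, and that the belief-determinacy invariant genuinely makes $\pi_1^G$ observation-based rather than merely belief-based in a game where beliefs might not be observable. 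Handling the $q_F$ sink and the $\epsilon$-probability bookkeeping is routine by comparison.
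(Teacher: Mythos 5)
Your projection/lifting framework is reasonable scaffolding (the paper takes the play correspondence between $\mathcal{H}$ and $G$ essentially for granted), but the step you dismiss as ``an easy argument'' and ``routine $\epsilon$-probability bookkeeping'' is in fact the entire content of the paper's proof, and your characterization of it is wrong. You assert that ``$q_F$ is reached only through nature states whose underlying transition enters $F$.'' That is false for case 2c of the transition function: when $\post_G(s,a)\cap F\ne\emptyset$ and $\post_G(s,a)\setminus F\ne\emptyset$, the move to $q_F$ is a fictitious $\epsilon$-branch that fires regardless of whether the actual successor in $G$ lands in $F$. So a play of $\mathcal{H}$ can be absorbed at $q_F$ while the corresponding $G$-play has not visited $F$ and never will along that sample path. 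Consequently, ``$q_F$ reached almost surely in $\mathcal{H}$'' does not directly transfer to ``$F$ visited almost surely in $G$'' via your measure identity; the $\epsilon$-mass redirected to $q_F$ is exactly the place where the coupling between the two events breaks.

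The missing argument is the one the paper supplies: almost-sure reachability of $q_F$ forces the set $Y$ of nature states with $\post_G(s,a)\cap F\ne\emptyset$ to be visited with probability one, and --- crucially --- because the almost-sure winning strategy keeps the play inside the winning region, $Y$ is \emph{revisited} with probability one whenever the actual $G$-transition misses $F$. Each visit to $Y$ yields probability at least $p=\min_{(s,B',a,\sa_1,\detected)\in Y}\Pr(F\mid s,a)>0$ of genuinely entering $F$, so the probability of never reaching $F$ is bounded by $\lim_{k\to\infty}(1-p)^k=0$. Without this recurrence argument your proof does not go through; with it, the $\epsilon$-branch is seen to be only a device for marking the ``good'' nature states, not a proxy for the event of reaching $F$. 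I would also note that your belief-soundness invariant and the P2 strategy-space matching, while worth stating, are not where the proof can fail --- the paper's restrictions on $\act_2$ via $\detected$ only constrain P2 in $\mathcal{H}$ relative to later stages of the analysis, and the lemma as stated quantifies over P2's strategies in $\mathcal{H}$ itself.
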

\begin{proof}
 
By construction of the game $\mathcal{H}$, the event of reaching the final state $q_F$ is conditioned on the event that the nature states $(s,B',a,\sa_1, \detected) \in Q_N$, where, $\post_{G}(s, a) \cap F \neq \emptyset$ or $\post_{G}(s, a) \subseteq F$ is visited. Let $Y \subseteq Q_N$ be all nature states that can be reached before visiting $q_F$. If $q_F$ is visited with probability one from any state in the almost-sure winning region, then the set $Y$ must be visited with probability one from any state in P1's almost-sure winning region in the game $\mathcal{H}$. Let $p=\underset{(s,B',a,\sa_1,\detected)\in Y}{\min}\mathsf{Pr}(F\mid s,a)$ be the minimal probability of reaching $F$ from taking action $a$ state $(s,B',a,\sa_1,\detected)$ in $Y$.  The probability of not reaching $F$ in $k$ visits to $Y$ is smaller than $(1-p)^k$. Also, if $F$ is not reached, then, the almost-sure winning strategy will reach some state $q'$ in the almost-sure winning region of P1 from which $Y$ is revisited with probability one. Hence, the probability of eventually reaching $F$ is $\underset{k\rightarrow \infty}{\lim}1-(1-p)^k = 1$. That is, an almost-sure winning strategy in $\mathcal{H}$ to reach $\{q_F\}$ is also almost-surely winning to visit $F$ in the game $G$.
\end{proof}
 
 
It is remarked that P2 may detect that the perceptual game $G^2$ is different from the true game $G$ even before observing P1 deploys the hidden sensor. In particular, if P2 detects that P1 has deviated from the equilibrium in perceptual game $G^2$, then she knows that her game model is incorrect. Lemma~\ref{lemma:belief-based-win-H-sufficient} shows that P1 winning in the sensor-revealing game $\mathcal{H}$ is a sufficient condition, but not a necessary condition. This is because the solution of $\mathcal{H}$ assumes that P2 knows the game played after revealing the hidden sensor, which is not the case.
 
To relax the assumption and assess the strategic value of a hidden sensor, we decompose the sensor-revealing game (Def.~\ref{def:sen_rev_game}) into two games: The first one, called \emph{initial game} is used to construct P1's game before P2 realizes that P2's perceptual game is not the true game. In addition, we restrict both players' actions to be \emph{subjective rationalizable actions} (to be defined next) in P2's perceptual game. The second game called \emph{delay-attack game}, is used to compute a set of states from which P1 can ensure to achieve his objective while P2 has complete knowledge about P1's sensing capability but has some delay in carrying out sensor attacks. Based on backward induction, the solution for the second game is used to define the objective for the first game. We thus start by solving the second game.
\\

 \begin{definition}[Delay-Attack Game]
 \label{def:delayed_attack_game}
 Given the sensor-revealing game $	\mathcal{H}$, the stochastic delayed-attack game is a tuple
      	\[
	\mathcal{G}^{+} = \langle Q^{+} \cup Q_{F}^{+}, (A_1 \times \SA_1) \cup \SA_2, \delta^+,\tau_r  \rangle,
	\]
where,
\begin{itemize}
    \item $Q^{+} = Q^{+}_1 \cup Q^{+}_N \cup Q^{+}_2$, is the set of states consisting of P1, nature states and P2's states. $Q^{+}_1 = \{(s,B,\detected) \in Q_1\mid  \detected \ge 0\}$ is the set of states where P1 selects a (control and sensing) action $(a, \sa_1)$. $Q^{+}_N = \{(s,B,a,\sa_1,\detected) \in Q_N \mid   \detected \ge 0  \}$ is the set of nature's state. $Q^{+}_2 = \{(s,B,\sa_1,\detected) \in Q_2 \mid  \detected \ge 0 \}$ is the set of states where P2 selects a sensor attack action. 
     \item $Q^{+}_F = \{q_F\}$ is the final state. It is also a sink state.

 \item The transition function $\delta^+$ can be obtained from the transition function $\delta$ in the game $\mathcal{H}$ by eliminating states that are not in $Q^+$ and their incoming/outgoing transitions.
    
    
\end{itemize}
The rest of the components are as in the game $\mathcal{H}$. 
 \end{definition}

The initial state is omitted because we will compute all winning initial states. Since the delay-attack game is a game with complete knowledge and one-sided (P1's) partial observations, we can employ the solution in \cite{lcss_paper}
 to solve the game.  Let the almost-sure winning region of P1 be represented as $\asw_{1}(\mathcal{G}^{+})$ and the almost-sure winning strategy for P1 be $\pi_1^+$. 

 \subsection{Computing the sensor-revealing strategy}
 In this section, we synthesize sensor-revealing almost-sure winning strategies for P1. 
 A key observation is that the revealing time step can be different from the time when P1 queries the hidden sensor.
The revealing time is when P2 detected a deviation of P1's strategy from the equilibrium in P2's perceptual game. To quantify this deviation, we first solve P2's perceptual game.
\\
\begin{definition}[Belief-augmented P2's perceptual game]
\label{def:P2's_perceptual_game}
    Given the P2's perceptual game $G^2$ and the sensor-revealing game $\mathcal{H}$, the belief-augmented P2's perceptual game is a tuple 
      	\[
	\mathcal{G}^{2} = \langle Q^{2} \cup Q_{F}, (A_1 \times \SA_1^2) \cup \SA_2^2, \delta^2, q_0 \rangle,
	\]

    where,
    \begin{itemize}
    \item $Q^{2} = Q^{2}_1 \cup Q^{2}_N \cup Q^{2}_2 $, is the set of states consisting of P1, nature and P2's states. $Q^{2}_1 = \{(s,B,\detected) \in Q_1 \mid  \detected =-1 \}$ is the set of states where P1 selects a (control and sensing) action $(a, \sa_1) \in \act_1^2$. $Q^{2}_N = \{(s,B,a,\sa_1,\detected) \in Q_N \mid   \detected =-1, (a,\sa_1)\in \act_1^2\}$ is the set of nature's state. $Q^{2}_2 = \{(s,B,\sa_1,\detected) \in Q_2 \mid   \detected =-1\}$ is the set of states where P2 selects a sensor attack action. 

     \item $Q_F = \{q_F\}$ is the final state. It is also a sink state.

\item $\delta^2$ is defined using $\delta$ by restricting the transition functions to the domain $(Q^2_1 \times  \act_1^2) \cup Q_N \cup (Q^2_2 \times \act_2^2) $.

    
\end{itemize}
The rest of the components are as in the game $\mathcal{H}$. 
\end{definition}

Note that the transitions for the game $\mathcal{G}^{2}$ are obtained from $\mathcal{H}$ by eliminating all states in which $\detected \ge 0$ (and their transitions) and all transitions enabled by P1's actions $\act_1 \setminus \act_1^2$.  We can also solve the above belief-augmented P2's perceptual game using the Algorithm \ref{alg:posg-reachability} from \cite{lcss_paper} to obtain the almost-sure winning region for P1 ($\asw_1(\mathcal{G}^{2})$) and P1's almost-sure winning strategy ($\pi_1^2$), perceived by P2.







 
 Using the solution of P2's perceptual game $\mathcal{G}^2$, we compute the \emph{subjectively rationalizable actions for two players} \cite{subjective_rationalizability_paper}. 
 \\
 \begin{definition}[Subjectively rationalizable actions for players in $\mathcal{G}^2$]
 \label{def:Subjective_rationalizability}
 Given the belief-augmented P2's perceptual game $\mathcal{G}^2$, 
let $\asw_i(\mathcal{G}^2)$ be the almost-sure winning region for player $i$ and  $\poswin_i(\mathcal{G}^2)$ is the positive-winning region for player $i$  in the game $\mathcal{G}^2$.
A \emph{subjectively rationalizable action} function $\mathsf{SR}_i: Q_i^{2} \rightarrow 2^{\act_i^2}$ maps each state $q \in Q_i^{2}$ to a set of actions of player $i$, for $i=1,2$ that are deemed rationalizable by P2 and is defined as follows. 
\[
\mathsf{SR}_i(q,\mathcal{G}^{2})= 
\begin{cases}
  \begin{aligned}
    &\{ a_i \in \act_i^2 \mid \post_{\mathcal{G}^2}(q,a_i) \subseteq \asw_i(\mathcal{G}^2)\}, \\ 
    &\hspace{9em} \text{if } q \in \asw_i(\mathcal{G}^2).
  \end{aligned}\\
  \begin{aligned}
    &\{ a_i \in \act_i^2 \mid \post_{\mathcal{G}^2}(q,a_i) \not\subseteq \asw_{j}(\mathcal{G}^2)\}, \\
    &\hspace{9em} \text{if } q \in \poswin_i(\mathcal{G}^2).
  \end{aligned}
\end{cases}
\]
    
      
 where $(i,j)\in \{(1,2), (2,1)\}$.

  \end{definition}


 In other words, from a state in the almost-sure winning region of P1, an action of P1  is considered rational by P2 if that action ensures the game stays within P1's almost-sure winning region; from a state in the positive winning region for P1, P1 can select an action which ensures the next state stays within P1's positive winning region in the game perceived by P2. P2's subjective rationalizable actions are defined similarly.

If P1 selects an action that is not subjectively rationalizable to P2, P2 would realize that P2's perceptual game is different from the true game even if that action does not query a hidden sensor. Therefore, a question arises: ``Is there any advantage for P1 to select a non-subjectively rationalizable action  without revealing the hidden sensor?'' 

Next, we demonstrate that if P1's objective is to achieve its goal with probability one, then the answer is \emph{no}.
To begin, we introduce the notions of revealing and sensor-revealing strategies.
\\
\begin{definition}[A Revealing Strategy]
\label{def:revealing_strategy}
Given the belief-augmented P2's perceptual game $\mathcal{G}^2$ and the sensor revealing game $\mathcal{H}$, a strategy $\pi_1: Q^2\rightarrow \dist(\act_1)$ is \emph{revealing} if for any strategy $\pi_2\in \Pi_2$, for at least one play $\rho  \in \outcomes_{\mathcal{G}^2}(q,\pi_1, \pi_2) $ starting from the state $q$, there exists a \emph{finite} time $0\le t<\infty$ such that $(a^t_1,\gamma_1^t) \in \act_1 \setminus \mathsf{SR}_1(q_t, \mathcal{G}^2)$,  where $q_t$ is the $t$-th state in this play and for all $0\le k <t$, $\pi_1(q_t,a)>0$ only if $a\in \mathsf{SR}_1(q, \mathcal{G}^{2})$. The time $t$ is called the \emph{revealing time}. 
Further, if $(a_1^{t},\sa_1^{t}) \in  \act_1 \setminus A_1^2$, then this revealing strategy is called the \emph{sensor revealing strategy}.
\\ 
\end{definition}

\begin{lemma}[pp. 23 in \cite{bertrand2017qualitative}]
\label{claim:P2_perfect_same_as_partial}
In the game $\mathcal{G}^2$, let $\asw_2(\mathcal{G}^2, \text{perfect})$ be P2's almost-sure winning region against a player P1 with \emph{perfect observations}. It holds that  \[\asw_2(\mathcal{G}^2) = \asw_2(\mathcal{G}^2, \text{perfect}).\]
\end{lemma}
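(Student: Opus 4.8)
The plan is to prove the two inclusions of the asserted equality separately. Throughout, recall that in $\mathcal{G}^2$ the reachability target of P2's opponent is the sink $q_F$, that P2 has perfect observation (so a P2 strategy is, and may be, an arbitrary function of the full play prefix in $\mathcal{G}^2$), and that every belief-based P1 strategy is, in particular, a legal P1 strategy in the variant where P1 observes everything.

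The inclusion $\asw_2(\mathcal{G}^2,\text{perfect}) \subseteq \asw_2(\mathcal{G}^2)$ is immediate: a P2 strategy $\pi_2$ that forces the probability of reaching $q_F$ to be $0$ against \emph{every} perfectly-informed P1 strategy does so in particular against every belief-based P1 strategy (which form a sub-collection of the former), so any state from which P2 wins against perfectly informed P1 is also won by P2 against partially informed P1.

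The converse inclusion $\asw_2(\mathcal{G}^2) \subseteq \asw_2(\mathcal{G}^2,\text{perfect})$ is the substantive part; it is the familiar fact that positive-probability reachability is insensitive to the reaching player's information. Let $q \in \asw_2(\mathcal{G}^2)$ with witnessing P2 strategy $\pi_2$. By Definitions~\ref{def:sen_rev_game} and \ref{def:P2's_perceptual_game}, at every P1 state $(s,B,\detected)\in Q_1^2$ one has $\detected=-1$ and the enabled P1 actions are exactly $\act_1^2$, \emph{independent of} $(s,B)$; hence the strategy $\pi_1^{\mathsf{u}}$ that plays the uniform distribution over $\act_1^2$ at every P1 state is a well-defined belief-based strategy, and so $\pi_2$ keeps the probability of reaching $q_F$ from $q$ equal to $0$ against $\pi_1^{\mathsf{u}}$. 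Now fix an arbitrary perfectly-informed P1 strategy $\pi_1'$ and suppose, for contradiction, that $(\pi_1',\pi_2)$ reaches $q_F$ from $q$ with positive probability; then some finite play prefix $h = q \to \cdots \to q_F$ has positive probability under $(\pi_1',\pi_2)$. Walking along $h$ move by move shows that $h$ also has positive probability under $(\pi_1^{\mathsf{u}},\pi_2)$: every nature transition in $h$ carries the same positive probability regardless of the players' strategies; at each P1 state the action used in $h$ is played by $\pi_1'$ with positive probability and by $\pi_1^{\mathsf{u}}$ with probability $1/|\act_1^2|>0$; and at each P2 state, since P2 observes the play prefix perfectly and that prefix has been realized (inductively) with positive probability under $(\pi_1^{\mathsf{u}},\pi_2)$ as well, $\pi_2$ prescribes exactly the same distribution there and hence assigns the same positive probability to the move used in $h$. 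Multiplying these positive probabilities contradicts $\Pr(\text{reach } q_F \mid q,\pi_1^{\mathsf{u}},\pi_2)=0$. Hence $(\pi_1',\pi_2)$ reaches $q_F$ from $q$ with probability $0$ for every perfectly-informed $\pi_1'$, i.e.\ $q \in \asw_2(\mathcal{G}^2,\text{perfect})$.

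I expect the argument in the previous paragraph to be the only delicate point: it hinges on the fully-randomizing belief-based strategy $\pi_1^{\mathsf{u}}$ being able to reproduce, with positive probability, \emph{every} finite play prefix achievable by \emph{any} history-dependent, perfectly informed P1 strategy, together with P2's perfect observation guaranteeing that $\pi_2$ repeats its own moves verbatim along that prefix. The accompanying small check is that $\pi_1^{\mathsf{u}}$ is genuinely belief-based, which holds because every P1 state of $\mathcal{G}^2$ carries $\detected=-1$, so the enabled-action set is the constant set $\act_1^2$ and the chosen distribution can be taken identical at all (in particular, at all observation-equivalent) P1 states.
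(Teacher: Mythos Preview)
Your proof is correct. The paper itself does not give a self-contained argument here: it simply cites \cite{bertrand2017qualitative} and offers the one-line intuition that for a safety objective, almost-sure winning coincides with sure winning, and sure safety is a purely local invariance property (``no P1 action can leave the region with positive probability'') that is blind to P1's information.

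Your route is different in presentation though equivalent in content. Rather than invoking the structural/fixed-point characterization of the safe set, you run a direct coupling argument: the uniformly randomizing belief-based strategy $\pi_1^{\mathsf{u}}$ puts positive mass on every finite play prefix that any perfectly informed $\pi_1'$ could produce, and P2's perfect observation guarantees $\pi_2$ replays its moves verbatim along that prefix, so a positive-probability reach under $(\pi_1',\pi_2)$ forces one under $(\pi_1^{\mathsf{u}},\pi_2)$. This is the standard ``positive reachability is information-insensitive'' argument made explicit. What your approach buys is a fully self-contained proof that does not rely on the reader knowing the sure/almost-sure collapse for safety; what the paper's phrasing buys is a conceptual one-liner that connects to the well-known determinacy picture. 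Your check that $\pi_1^{\mathsf{u}}$ is genuinely belief-based (because the enabled action set at every P1 state of $\mathcal{G}^2$ is the constant set $\act_1^2$) is exactly the small verification needed to make the coupling go through in this particular belief-augmented arena.
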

The proof is provided in \cite{bertrand2017qualitative}.
  In a two-player stochastic game where P1 has a reachability objective and P2 has a safety objective, P2's almost-surely winning a safety game coincides with winning surely the safety game, which in turn coincides with winning surely against an opponent with perfect observations. This is because, for P2 to almost-surely satisfy it's safety objective from a state, then for every state in P2's almost-sure winning region, no action of P1 can force the game to leave P2's almost-sure winning region with a positive probability, regardless if P1 observes the state perfectly or not.
  \\
\begin{lemma}
\label{claim:P2_remains_in_P2s_winning}
    For any state $q=(s,B,-1) \in \asw_2(\mathcal{G}^2)$, there does not exist a strategy for P1 to satisfy P1's reachability objective with positive probability, or with probability one, regardless that strategy uses a hidden sensor or not.
\end{lemma}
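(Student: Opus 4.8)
The plan is to reduce the claim to a purely graph-theoretic property of $G$'s transition structure and then observe that neither the hidden sensor nor the reaction delay alters that structure. First I would apply Lemma~\ref{claim:P2_perfect_same_as_partial} to rewrite $\asw_2(\mathcal{G}^2)=\asw_2(\mathcal{G}^2,\text{perfect})$, so that a state $q=(s,B,-1)\in\asw_2(\mathcal{G}^2)$ admits a P2 strategy in $\mathcal{G}^2$ --- using only attack actions in $\act_2^2$, since $\sigma_0$ is not in P2's action set there --- that keeps $q_F$ unreached even against a P1 that observes the true state perfectly. The structural point I would then make is that in $\mathcal{G}^2$, and equally in $\mathcal{H}$, the true-state trajectory and the event of reaching $q_F$ depend only on P1's control actions $a_t\in A_1$ and nature's choices governed by $\mathbf{P}$; P1's sensing actions (including a query of $\sigma_0$) and P2's attack actions only edit the belief $B$ (and, in $\mathcal{H}$, the counter $\detected$), which in turn affect only P1's observations, never $\mathbf{P}(s,a,\cdot)$ nor $\post_G(s,a)$. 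Consequently, reachability of $q_F$ from a state coincides with reachability of $F$ from the true state $s$ along the graph $(S,\{(t,t')\mid \exists a\in A_1,\ \mathbf{P}(t,a,t')>0\})$, and this graph is the same object in $G$, $\mathcal{G}^2$, its perfect-observation variant, and $\mathcal{H}$.

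Next I would argue that $\asw_2(\mathcal{G}^2,\text{perfect})$ contains only states $(s,B,-1)$ from which $F$ is unreachable in that graph: were there a finite control path $s=t_0\to\cdots\to t_n$ with $t_{i+1}\in\post_G(t_i,a_i)$ and $t_n\in F$, a P1 with perfect observations could fix the control sequence $a_0,\dots,a_{n-1}$; by definition of $\post_G$ the true state follows this path with probability at least $\prod_{i}\mathbf{P}(t_i,a_i,t_{i+1})>0$, and at the nature state reached with $\post_G(t_{n-1},a_{n-1})\cap F\neq\emptyset$ the transition to $q_F$ fires with probability at least $\epsilon>0$ no matter what P2 does, contradicting $q\in\asw_2(\mathcal{G}^2,\text{perfect})$. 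Together with Lemma~\ref{claim:P2_perfect_same_as_partial}, this shows that $q\in\asw_2(\mathcal{G}^2)$ forces $F$ to be unreachable from $s$ in $G$'s transition graph.

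Finally I would transfer this to the true interaction: by the structural observation, $\mathcal{H}$ --- equivalently, $G$ with P1 equipped with $\sigma_0$ --- has exactly the same reachable-true-state set from $s$, so $F$ stays unreachable from $s$ regardless of whether or when P1 queries the hidden sensor or how long P2's reaction is delayed. Hence no P1 strategy can steer the true state into $F$ with positive probability, and therefore certainly not with probability one, which is the assertion. I expect the one delicate step to be this transfer: P2's safety certificate lives in the $\detected=-1$ fragment $\mathcal{G}^2$, whereas $\mathcal{H}$ enlarges the state space (the counter grows, P1 gains the extra sensing action), so one cannot simply call $\asw_2(\mathcal{G}^2)$ a trap inside $\mathcal{H}$; phrasing everything through reachability of $F$ by the true state --- which is insensitive to belief/counter bookkeeping and to P2's attack actions --- is what makes the transfer go through.
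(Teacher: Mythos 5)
Your proposal is correct, and its backbone coincides with the paper's: both proofs hinge on Lemma~\ref{claim:P2_perfect_same_as_partial} to upgrade $q\in\asw_2(\mathcal{G}^2)$ to almost-sure winning for P2 against a P1 with \emph{perfect} observations, and then conclude that a hidden sensor cannot help because it can do no better than perfect observation. Where you diverge is in how that last step is discharged. The paper uses a one-line domination argument (``the best case scenario for the hidden sensor is to give P1 perfect observations, making him the strong P1$^*$''), which implicitly requires that more observation power never hurts. You instead make this rigorous by a structural reduction: since $\mathbf{P}$ depends only on $A_1$ and nature, while sensing and attack actions touch only the belief $B$ and the counter $\detected$, reaching $q_F$ is equivalent to graph-reachability of $F$ from the true state $s$, and that graph is identical across $G$, $\mathcal{G}^2$, its perfect-observation variant, and $\mathcal{H}$. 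This buys you two things the paper leaves implicit: an explicit characterization of $\asw_2(\mathcal{G}^2,\text{perfect})$ as the set of states from which $F$ is graph-unreachable (correct here, precisely because P2's attacks never influence the state dynamics, so the perfect-observation game degenerates to an MDP), and a clean justification for why the certificate transfers from the $\detected=-1$ fragment to the enlarged state space of $\mathcal{H}$ --- a transfer the paper's domination phrasing glosses over. The ``positive probability'' part of the claim also falls out of your argument directly, whereas the paper relies on the remark following Lemma~\ref{claim:P2_perfect_same_as_partial} that P2's almost-sure safety coincides with sure safety. Both routes are valid; yours is more self-contained.
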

\begin{proof}
Let's consider a strong P1$^*$ who has perfect observations. Then, from Lemma~\ref{claim:P2_perfect_same_as_partial} we have that for any state $q$ that is almost-sure winning for P2 against a weaker opponent P1 with partial observations is still almost-sure winning for P2 against P1$^*$. 

If P1 uses the hidden sensor, the best case scenario is for P1 to obtain perfect observation and become the strong P1$^*$. Thus, despite using the hidden sensor and becoming a strong P1$^*$, P1 cannot change $q$ to be P1's almost-sure winning or positive winning state.

\end{proof}

 \begin{lemma}
\label{lemma:not_using_hidden_sensor_is_losing}
    For any state $q=(s,B,-1) \in \poswin_1(\mathcal{G}^2)$, if P1 selects an action $(a,\sa_1) \notin \mathsf{SR}_1(q, \mathcal{G}^{2})$ and $\sa_1 \notin \SA_1\setminus \SA_1^2$, then P2 has a strategy to prevent P1 from satisfying its reachability objective with probability one.
\end{lemma}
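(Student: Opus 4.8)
The plan is to reduce the claim to Lemma~\ref{claim:P2_remains_in_P2s_winning} by reading off what the hypothesis ``$(a,\sa_1)\notin\mathsf{SR}_1(q,\mathcal{G}^2)$ and $\sa_1\notin\SA_1\setminus\SA_1^2$'' says at a positive‑winning state. Since $q=(s,B,-1)\in\poswin_1(\mathcal{G}^2)$, Definition~\ref{def:Subjective_rationalizability} gives $\mathsf{SR}_1(q,\mathcal{G}^2)=\{a_1\in\act_1^2\mid \post_{\mathcal{G}^2}(q,a_1)\not\subseteq\asw_2(\mathcal{G}^2)\}$. The assumption $\sa_1\notin\SA_1\setminus\SA_1^2$ means $\sa_1\in\SA_1^2$, so the chosen action satisfies $(a,\sa_1)\in\act_1^2$; combined with $(a,\sa_1)\notin\mathsf{SR}_1(q,\mathcal{G}^2)$ this forces $\post_{\mathcal{G}^2}(q,(a,\sa_1))\subseteq\asw_2(\mathcal{G}^2)$. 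By the transition rule of Definition~\ref{def:sen_rev_game} for a P1 state with $\detected=-1$ under an action in $\act_1^2$, the unique successor of $q$ under $(a,\sa_1)$ is the nature state $q_N=(s,\post_G(B,a),a,\sa_1,-1)$, hence $q_N\in\asw_2(\mathcal{G}^2)$.

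Next I would propagate membership in $\asw_2(\mathcal{G}^2)$ one round forward onto a P1 state. By Lemma~\ref{claim:P2_perfect_same_as_partial} and the discussion following it, $\asw_2(\mathcal{G}^2)$ coincides with P2's sure‑winning region for the safety objective ``never visit $q_F$''; in particular P2 can keep the play inside it forever, and it is closed under nature's probabilistic transitions. Every nature state with $\post_G(s,a)\cap F\neq\emptyset$ (cases (a) and (c) of the nature‑state transition rule in Definition~\ref{def:sen_rev_game}) reaches $q_F$ with positive probability, so since $q_N\in\asw_2(\mathcal{G}^2)$ we must be in case (b): $\post_G(s,a)\cap F=\emptyset$, and every probabilistic successor $(s',\post_G(B,a),\sa_1,-1)$ with $s'\in\post_G(s,a)$ also lies in $\asw_2(\mathcal{G}^2)$. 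P2 then fixes her $\mathcal{G}^2$ safety strategy and, at each of these P2 states, plays the attack action that keeps the play in $\asw_2(\mathcal{G}^2)$, reaching a P1 state $q'=(s',B'',-1)\in\asw_2(\mathcal{G}^2)$. Thus, regardless of P1's realized action and observation, after one round the interaction is, with probability one, in a P1 state of $\asw_2(\mathcal{G}^2)$.

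Finally I would invoke Lemma~\ref{claim:P2_remains_in_P2s_winning} at $q'$: from any P1 state $(s',B'',-1)\in\asw_2(\mathcal{G}^2)$, P1 has no strategy — whether or not it later queries the hidden sensor — to reach $F$ with positive probability, \emph{a fortiori} none to reach $F$ with probability one. Hence the composite P2 strategy (her $\mathcal{G}^2$ safety strategy until the play re‑enters a P1 state, then the strategy supplied by Lemma~\ref{claim:P2_remains_in_P2s_winning}) prevents P1 from satisfying its reachability objective with probability one, which is the assertion.

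The main obstacle — and the reason Lemma~\ref{claim:P2_remains_in_P2s_winning} is doing the real work — is that nothing in the hypothesis stops P1 from querying the hidden sensor at some \emph{later} step, whereas the ``safe'' successors obtained inside $\mathcal{G}^2$ only account for non‑revealing continuations. This gap is exactly closed by Lemma~\ref{claim:P2_remains_in_P2s_winning}, which certifies that membership in $\asw_2(\mathcal{G}^2)$ is robust even to P1 upgrading to perfect observations via the hidden sensor. A minor point to verify is that P2's safety strategy computed in $\mathcal{G}^2$ is legitimate in the true game $\mathcal{H}$: this holds because as long as $\detected=-1$ the transitions of $\mathcal{G}^2$ and $\mathcal{H}$ agree on all actions in $\act_1^2$ and $\act_2^2$, and the instant $\detected\ge 0$ the argument has already passed control to Lemma~\ref{claim:P2_remains_in_P2s_winning}.
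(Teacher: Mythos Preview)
Your proposal is correct and follows essentially the same approach as the paper: deduce from $(a,\sa_1)\in\act_1^2\setminus\mathsf{SR}_1(q,\mathcal{G}^2)$ that $\post_{\mathcal{G}^2}(q,(a,\sa_1))\subseteq\asw_2(\mathcal{G}^2)$, then invoke Lemma~\ref{claim:P2_remains_in_P2s_winning}. Your version is slightly more careful in that you explicitly propagate through the nature and P2 states to land on a P1 state before applying Lemma~\ref{claim:P2_remains_in_P2s_winning} (which, as stated, is formulated for P1 states), whereas the paper applies it directly to the successor set; but the substance is identical.
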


\begin{proof}
Given that $q=(s,B,-1) \in \poswin_1(\mathcal{G}^2)$, for any $ (a, \sa_1) \in \act_1^{2}$, $\post_{\mathcal{G}^2}(q,a, \sa_1) \subseteq \poswin_1(\mathcal{G}^2) \cup \asw_2(\mathcal{G}^2)$ (by the solution of game $\mathcal{G}^2$, see \cite{lcss_paper}). If P1 selects $(a,\sa_1)\in \mathsf{SR}_1(q, \mathcal{G}^{2})$, then $\post_{\mathcal{G}^2}(q,a, \sa_1) \not\subseteq \asw_2(\mathcal{G}^2)$ by the definition of subjectively rationalizable actions.

Thus, while not taking the subjectively rationalizable actions and not using the hidden sensor, P1's action $(a,\sa_1) \in \act_1^{2} \setminus \mathsf{SR}_1(q, \mathcal{G}^2)$   ensures $\post_{\mathcal{G}^2}(q, a, \sa_1) \subseteq \asw_2(\mathcal{G}^2) $. From Lemma~\ref{claim:P2_remains_in_P2s_winning}, we have that from a state in $\asw_2(\mathcal{G}^{2})$, P1 has no strategy to satisfy his reachability objective with positive probability or with probability one regardless of using the hidden sensor or not.
Hence, if P1 takes an action $(a,\sa_1) \in \act_1^{2} \setminus \mathsf{SR}_1(q, \mathcal{G}^2)$, then P2  can ensure to win with probability one.
 \end{proof}

Lemma~\ref{claim:P2_remains_in_P2s_winning} and Lemma~\ref{lemma:not_using_hidden_sensor_is_losing} provide the following three insights: 
\begin{itemize}
	\item If $q\in \asw_1(\mathcal{G}^2)$, there is no need to use the hidden sensor because P1 can win without the hidden sensor.
	\item If $q\in \asw_2(\mathcal{G}^2) $, there is no need to use the hidden sensor because P1 cannot win even with the hidden sensor.
	\item If $q\in \poswin_1(\mathcal{G}^2)$, either P1 would take actions in $\mathsf{SR}_1(q, \mathcal{G}^2)$ to remain in the positive winning region or would take action $(a,\gamma_1)\in \act_1\setminus \act_1^2$ to reveal the hidden sensor, with a possibility to almost-surely win the game. That is,  P1 has no reason to deviate from the $\mathsf{SR}_1(q, \mathcal{G}^2)$ except when P1 decides to use the hidden sensor. 
	\end{itemize}

Based on these insights, we construct and solve P1's initial game to decide when to use the hidden sensor.
\\
 \begin{definition}[The Initial Game]
 \label{def:P2s_init_game}
Given P2's perceptual game $G^2$,  and players' subjectively rationalizable action functions $\mathsf{SR}_i$, $i=1,2$ (Def. \ref{def:Subjective_rationalizability}), and   sensor-revealing game $\mathcal{H}$ and P1's almost-sure winning region in the delay-attack game $\asw_1(\mathcal{G}^+)$,    P1's initial game is a stochastic two-player game with a reachability objective defined as:
      	\[
	\mathcal{G}^{0} = \langle Q^0 \cup \{q_F\}, \act_1, \act_2^2, \delta^0  \rangle,
	\]
where,
\begin{itemize}
    \item $Q^0 = Q^0_1 \cup Q^0_N \cup Q^0_2 $ where  $Q^0_1 = \{(s, B, \detected) \in Q_1 \mid   \detected =-1 \}$ is a set of P1's states, the set 
    $Q^0_N = \{(s, B, a, \sa_1, \detected) \in Q_N \mid  \detected \in \{-1,0\}\}$ is the set of nature's states. $Q^0_2 = \{(s, B, \sa_1, \detected) \in Q_2 \mid \detected \in \{-1\}\}$ is the set of P2 states.
    
    \item $q_F$ is the unique final state in $\mathcal{G}^0$ and is a sink state.
    \item $ \act_1 $ is the set of actions for P1.
    \item $\act_2^2$ is the set of actions for P2.   

\item $\delta^0: Q^0\times \act_1\times \act_2^2\rightarrow \dist(Q^0)$ is the probabilistic transition function.
 For any state $q\in \asw_1(\mathcal{G}^2)$, let $\delta^0(q,q_F)=1$, which means $q_F$ is reached with probability one. 
 And for any state $q \in \asw_2(\mathcal{G}^2)$, for any action $a  \in \act_1 \cup \act_2^2$, let $\delta^0(q,a,q) = 1$. For any state $q\notin \asw_1(\mathcal{G}^2) \cup  \asw_2(\mathcal{G}^2)$, the transition is defined as follows. 
\begin{enumerate}
\item Given a P1's state $q = (s,B,-1) \in Q_1^{0}$, we have the following two cases,
        \begin{enumerate}
            \item For an action $(a,\sa_1) \in \mathsf{SR}_1(q,\mathcal{G}^2)$, $\delta^0(q,(a,\sa_1), q')= \delta(q,(a,\sa_1),q')$ for any $q'$ reachable from $q$. 
            \item For an action $(a,\sa_1) \in \act_1\setminus \act_1^2$, $\delta^0(q,(a,\sa_1),(s,B',a,\sa_1,0))=1$  where $B'=\post_{G}(B,a)$.
         \end{enumerate} 

\item Given a nature's state $q=(s,B,a,\sa_1,\detected) \in Q_N^{0}  $, we have the following three cases,
\begin{enumerate}[I.]
    \item $\detected=-1$, let $\delta^{0}(q,q') = \delta(q,q')$ for any $q'$ reachable from $q$.
    \item  $\detected =0$  and $q \in \asw_1(\mathcal{G}^{+})$, let $\delta^{0}(q,q_F) = 1$. That is, from a state in P1's almost-sure winning region in the delay attack game, a transition to $q_F$ is made with probability one.
    \item $\detected=0$ and $q \not \in \asw_1(\mathcal{G}^{+})$, $\delta^{0}(q,q)=1$.  If the sensor is revealed to P2 and the state is not winning for P1 in the delay-attack game, then the state becomes a sink state. 
\end{enumerate}

\item Given a P2's state $q = (s,B,\sa_1,-1) \in Q_2^{0}$, 
only actions in $\mathsf{SR}_2(q,\mathcal{G}^2)$ is allowed and for any $\sa_2 \in \mathsf{SR}_2(q,\mathcal{G}^2)$, 
$\delta^{0}(q,\sa_2,q') = \delta(q, \sa_2, q')$ for any reachable $q'$.
 
    \end{enumerate}
    \item P1's objective in the initial game is to reach the state $q_F$.
\end{itemize}
\end{definition}

P1's initial game can now be solved for P1's almost-sure winning using the Algorithm~\ref{alg:posg-reachability} from \cite{lcss_paper} and thus, we obtain $\asw_1(\mathcal{G}^0)$ and $\pi_1^0$. 

Because P2's perceptual game changes during the interaction, we introduce the notion of behavior subjective rationalizable strategy profile as the solution concept of the dynamic hypergame. 
\\
\begin{definition}[Behavior Subjectively Rationalizable Strategy]
A strategy $\pi_2\colon Q \to \dist (\act_2)$ is behaviorally subjectively rationalizable for P2 if, 
  \[
   \supp (\pi_2(q )) \subseteq \begin{cases}
        \mathsf{SR}_2(q, \mathcal{G}^2), & \text{if } \hspace{2mm} \last (q) = -1,\\
        \mathsf{SR}_2(q, \mathcal{G}^{+}), & \text{if } \hspace{2mm} {\last(q) \ge 0 .}
    \end{cases} \]
  where $\last(q)$ maps a state $q$ to the last component $\detected$ of $q$, $\mathsf{SR}_2(q, \mathcal{G}^2) $ is a subjectively rationalizable strategy of P2 in P2's perceptual game $\mathcal{G}^2$ and $\mathsf{SR}_2(q,\mathcal{G}^{+})$ is a subjectively rationalizable strategy of P2 in the delay attack game $\mathcal{G}^+$ given analogously as,
\[
\mathsf{SR}_2(q, \mathcal{G}^{+})= 
 \begin{cases}
  \begin{aligned}
    &\{\sa \in \SA_2 \mid \post_{\mathcal{G}^{+}}(q,\sa) \subseteq \asw_2(\mathcal{G}^{+})\}, \\
    &\hspace{8.5em} \text{if } q \in \asw_2(\mathcal{G}^{+}).
  \end{aligned}\\
  \begin{aligned}
    &\{\sa \in \SA_2 \mid \post_{\mathcal{G}^{+}}(q,\sa) \not\subseteq \asw_1(\mathcal{G}^{+})\}, \\
    &\hspace{8.25em} \text{if } q \in \poswin_2(\mathcal{G}^{+}).
  \end{aligned}
 \end{cases}
\]


  
\end{definition}
In words, P2 always commits to a rational strategy in her perceptual game, which is initially $\mathcal{G}^2$ and then changes to $\mathcal{G}^+$ after detecting the use of a hidden sensor. 

 In the following lemma, we show that the almost-sure winning strategy computed by P1 remains winning even if P2 does not choose all subjectively rationalizable actions with a positive probability. 
\\
 \begin{lemma}
    Any almost-sure winning strategy $\pi_1 \in \Pi_1$ for P1 in the games $\mathcal{G}^{0}$ or $\mathcal{G}^{+}$ when P2 plays a strategy such that $\supp(\pi_2(q)) = \mathsf{SR}_2(q, \mathcal{G}^{2})$ or $\mathsf{SR}_2(q, \mathcal{G}^{+})$, respectively, for all $q$ where $\pi_2$ is defined, remains almost-sure winning for P1 when P2 employs a strategy such that there exists $q$ where $\pi_2(q)$ is defined, $\supp(\pi_2(q)) \subset \mathsf{SR}_2(q, \mathcal{G}^{2})$ or  $\supp(\pi_2(q)) \subset \mathsf{SR}_2(q, \mathcal{G}^{+})$.
\end{lemma}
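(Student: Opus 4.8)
The plan is to reduce the statement to a question about finite Markov chains and then exploit the restricted role P2 plays in the sensor-revealing construction. Fixing $\pi_1$ together with any P2 strategy yields a finite Markov chain, and ``$\pi_1$ is almost-sure winning against $\pi_2$'' is exactly ``$q_F$ is reached with probability one from $q_0$ in that chain.'' For finite Markov chains this is a pure support property: $q_F$ is reached with probability one from $q_0$ if and only if every state reachable from $q_0$ in the support digraph can itself reach $q_F$ in that digraph (equivalently, the only bottom strongly connected component reachable from $q_0$ is $\{q_F\}$). Let $\bar\pi_2$ denote the full-support strategy, $\supp(\bar\pi_2(q)) = \mathsf{SR}_2(q,\cdot)$ at every P2 state (with $\cdot$ standing for $\mathcal{G}^2$ or $\mathcal{G}^{+}$ according to the game considered), and let $\pi_2'$ satisfy $\supp(\pi_2'(q)) \subseteq \mathsf{SR}_2(q,\cdot)$. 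Because $\pi_1$, the nature transitions, and the action sets $\mathsf{SR}_2$ are held fixed, the support digraph of the chain $M'$ induced by $\pi_2'$ is a \emph{subgraph} of the digraph of the chain $M$ induced by $\bar\pi_2$, differing only in that some out-edges of P2 states are deleted. Hence every state reachable from $q_0$ in $M'$ is reachable in $M$, so by hypothesis can reach $q_F$ in $M$; write $R' \subseteq R$ for these reachable sets.

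The real content is that deleting P2's out-edges cannot create a $q_F$-avoiding trap, and this rests on two structural facts. First, a local winning-propagation property: at every P2 state $q \in R$ and every $\sa_2 \in \mathsf{SR}_2(q,\cdot)$, the deterministic $\sa_2$-successor is again almost-sure winning under $\bar\pi_2$; otherwise, since $\bar\pi_2$ plays $\sa_2$ with positive probability, there would be positive probability of never reaching $q_F$ from $q$, contradicting that $q$ is almost-sure winning under $\bar\pi_2$. Second, the construction of $\mathcal{H}$ (Def.~\ref{def:sen_rev_game}) confines P2 to a single role: a P2 action only updates the belief, $B'' = B' \cap \obs(s',\sa_1,\sa_2)$, and every transition into $q_F$ occurs at a \emph{nature} state, enabled solely by the condition $\post_G(s,a)\cap F\neq\emptyset$ on the true state and P1's action, never by a P2 choice. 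Moreover attacking can only coarsen P1's observation and thus enlarge $B''$, while almost-sure winning is monotone in belief precision (a smaller, more informative belief is winning whenever a larger one is). Consequently the least-informative $\sa_2$ is the hardest for P1, so the local property already certifies that every $\mathsf{SR}_2$-successor, and in particular every successor retained by $\pi_2'$, lands in an almost-sure winning belief.

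I would then assemble the no-trap argument by contradiction. If $\pi_1$ were not almost-sure winning against $\pi_2'$, then $M'$ would have a reachable bottom strongly connected component $C$ with $q_F \notin C$. Every nature state in $C$ must satisfy $\post_G(s,a)\cap F=\emptyset$, since otherwise it carries an edge to $q_F$ which, being P2-independent, survives in $M'$ and contradicts $C$ being closed and $q_F$-avoiding. Thus along $C$ the strategy $\pi_1$ forever selects, at the visited true states, actions that do not enable a $q_F$-transition, even though the belief of every state in $C \subseteq R' \subseteq R$ is an almost-sure winning belief. This is the crux and the main obstacle: because $\pi_1$ is belief-based, a belief may recur in $C$ paired with different true states, so I must rule out that $\pi_2'$ concentrates the play precisely on those true states at which $\pi_1$'s belief-prescribed action fails to reach $F$. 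The resolution combines P2-independence with monotonicity: the set of $q_F$-enabling nature states depends only on $(s,a)$ and is identical in $M'$ and $M$, and the beliefs reachable under $\pi_2'$ form a subset of those reachable under $\bar\pi_2$, each winning; hence the very plays by which belief-based $\pi_1$ drives $M$ into a $q_F$-enabling nature state with probability one are available, and not blocked by P2, in $M'$ as well. This forces some nature state in $C$ to satisfy $\post_G(s,a)\cap F\neq\emptyset$, a contradiction. Therefore no such $C$ exists, every state reachable under $\pi_2'$ reaches $q_F$ in $M'$, and $\pi_1$ remains almost-sure winning. The argument applies verbatim in $\mathcal{G}^{0}$ with $\mathsf{SR}_2(\cdot,\mathcal{G}^2)$ and in $\mathcal{G}^{+}$ with $\mathsf{SR}_2(\cdot,\mathcal{G}^{+})$, since both inherit the transition structure of $\mathcal{H}$.
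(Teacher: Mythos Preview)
The paper's argument is much shorter and rests on a different reading of the hypothesis. It treats ``almost-sure winning strategy $\pi_1$ in $\mathcal{G}^0$ (resp.\ $\mathcal{G}^+$)'' in the standard sense---winning against \emph{every} P2 strategy admissible in that game---which is exactly how $\pi_1^0$ and $\pi_1^+$ are produced by Algorithm~\ref{alg:posg-reachability}. The proof is then a one-step subgame observation: since $\mathcal{G}^0$ and $\mathcal{G}^+$ are built with P2 ranging over (at least) the subjectively rationalizable actions, any $\pi_2'$ with $\supp(\pi_2'(q))\subseteq\mathsf{SR}_2(q,\cdot)$ is already a legal opponent strategy there, hence $\pi_1$ defeats it. No Markov-chain analysis is needed.

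You instead read the hypothesis as ``$\pi_1$ wins against the single full-support strategy $\bar\pi_2$'' and try to deduce that it wins against every subset-support $\pi_2'$. That is a strictly stronger claim, and---as you yourself flag---the obstacle is real: removing P2-edges \emph{can} create a $q_F$-avoiding bottom component (a P2 state with one action looping back and another progressing toward $q_F$ already exhibits the phenomenon). Your resolution does not close this. The sentence ``the very plays by which belief-based $\pi_1$ drives $M$ into a $q_F$-enabling nature state \ldots are available, and not blocked by P2, in $M'$ as well'' is precisely what must be proved; those plays in $M$ may traverse P2-edges that $\pi_2'$ deletes and hence need not exist in $M'$. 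The structural facts you invoke (P2 changes only the belief; $q_F$-edges fire at nature states and depend only on $(s,a)$) do not preclude the trap, because $\pi_1$'s control action depends on the belief and $\pi_2'$ can steer the belief sequence so that, at the true states actually visited, $\pi_1$ perpetually selects actions with $\post_G(s,a)\cap F=\emptyset$. The monotonicity remark does not help either: $\pi_2'$ may concentrate on the \emph{most} aggressive attack, yielding the \emph{largest} $B''$, so beliefs under $\pi_2'$ are not refinements of those under $\bar\pi_2$. In short, the contradiction you announce is not established; what would establish it is that $\pi_1$ wins against all admissible P2 strategies---which is the paper's starting point, not a conclusion to be derived.
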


\begin{proof}
Let us consider a P2$^\dagger$ who follows the strategy $\pi_2$ that satisfies, for at least one $q$ where $\pi_2$ is defined, $$\supp(\pi_2(q)) \subset    \begin{cases}
        \mathsf{SR}_2(q, \mathcal{G}^2), & \text{if } \hspace{2mm} \last (q) = -1,\\
        \mathsf{SR}_2(q, \mathcal{G}^{+}), & \text{if } \hspace{2mm} {\last(q) \ge 0, } \end{cases}$$ 
        for a state $q\in Q^0$ or $q\in Q^+$. That is, P2 follows a strategy that has a distribution over a strict subset of subjectively rationalizable actions. From Def.~\ref{def:delayed_attack_game} and Def.~\ref{def:P2s_init_game}, we construct the game $\mathcal{G}^{+}$ and $\mathcal{G}^{0}$ considering all the subjectively rationalizable actions for P2. Thus, the delayed-attack game and initial game constructed with P2$^\dagger$ would be a subgame of $\mathcal{G}^{+}$ and $\mathcal{G}^{0}$ respectively, thereby making P2$^\dagger$ a weak opponent to P1. 
Since P1's strategy $\pi_1$ is winning in $\mathcal{G}^0$ (or $\mathcal{G}^+$) against a stronger P2, it remains winning against the weaker opponent P2$^\dagger$.

\end{proof}

 Next, we introduce the notion of a deceptive almost-sure winning strategy and provide Lemma \ref{lemma:dasw_1_is_supset_asw} and Lemma \ref{lemma:almost_sure_in_G2_is_subset_G0} to demonstrate that the deceptive almost-sure winning region in the sensor-revealing game can be determined by computing the almost-sure winning regions of both the delay-attack game and initial game.

\begin{definition}[Deceptive Almost-Sure Winning Strategy/Region] 
\label{def:dasw_strategy}
In the sensor-revealing game $\mathcal{H}$,  a state $q\in Q$ is \emph{\ac{dasw}} for P1 if there exists a strategy for P1 that ensures with probability one, $q_F$ can be reached against any behavior subjectively rationalizable strategy of P2. The set of deceptive almost-sure winning states in $\mathcal{H}$ is called the \emph{\ac{dasw} region} for P1, denoted $\dasw_1(\mathcal{H})$.\\
  \end{definition}

\begin{lemma}
    \label{lemma:dasw_1_is_supset_asw}
$\dasw_1(\mathcal{H})\supseteq \asw_1(\mathcal{G}^{0}) \cup \asw_1(\mathcal{G}^{+})$.
\end{lemma}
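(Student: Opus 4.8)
The plan is to show each of the two sets $\asw_1(\mathcal{G}^{0})$ and $\asw_1(\mathcal{G}^{+})$ is contained in $\dasw_1(\mathcal{H})$ by exhibiting, for a state in either set, a P1 strategy in $\mathcal{H}$ that reaches $q_F$ with probability one against every behavior subjectively rationalizable strategy of P2. The key tools are the structural observations that both $\mathcal{G}^{+}$ and $\mathcal{G}^{0}$ are, up to renaming and the sink-collapsing of certain states, \emph{sub-games} of $\mathcal{H}$: their state sets $Q^{+}$ and $Q^0$ are subsets of $Q$, and (away from the states artificially made absorbing or winning in their respective definitions) their transition functions $\delta^{+}$ and $\delta^0$ agree with $\delta$. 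So a winning strategy there can be lifted to $\mathcal{H}$, provided P2's behavior in $\mathcal{H}$ is no stronger than the opponent modeled in the sub-game.

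First I would handle $\asw_1(\mathcal{G}^{+})\subseteq \dasw_1(\mathcal{H})$. Take $q\in\asw_1(\mathcal{G}^{+})$; by definition such a state has $\detected\ge 0$, i.e.\ the hidden sensor has already been revealed. Let $\pi_1^{+}$ be P1's almost-sure winning strategy in $\mathcal{G}^{+}$. Interpret $\pi_1^{+}$ as a strategy in $\mathcal{H}$ restricted to the reachable part of $Q^{+}$. Because $\delta^{+}$ is obtained from $\delta$ simply by deleting states with $\detected<0$ and their transitions, and from any $q$ with $\detected\ge 0$ no transition of $\delta$ can ever lower $\detected$ below $0$ (inspect Def.~\ref{def:sen_rev_game}: $\detected'\in\{\detected,\detected+1\}$ everywhere), the set $Q^{+}$ is closed under $\delta$ and the two games coincide on it. It remains to check that P2's behavior subjectively rationalizable strategy in $\mathcal{H}$, when $\last(q)\ge 0$, has support inside $\mathsf{SR}_2(q,\mathcal{G}^{+})$; this is exactly the definition of behavior subjective rationalizability in the $\detected\ge 0$ regime. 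Hence, by the preceding lemma (an almost-sure winning strategy against the full $\mathsf{SR}_2$ remains winning against any restriction of it), $\pi_1^{+}$ reaches $q_F$ with probability one in $\mathcal{H}$, so $q\in\dasw_1(\mathcal{H})$.

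Next I would handle $\asw_1(\mathcal{G}^{0})\subseteq\dasw_1(\mathcal{H})$, which is the more delicate half. Take $q\in\asw_1(\mathcal{G}^{0})$ and let $\pi_1^{0}$ be P1's almost-sure winning strategy in $\mathcal{G}^{0}$. The idea is to run $\pi_1^{0}$ in $\mathcal{H}$ in the initial phase ($\detected\le 0$), and whenever a play reaches a nature state $q'=(s,B,a,\sa_1,0)$ with $q'\in\asw_1(\mathcal{G}^{+})$ — which in $\mathcal{G}^0$ is collapsed to ``transition to $q_F$ with probability one'' — switch to the delay-attack strategy $\pi_1^{+}$ from that state and invoke the first half of the proof. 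The reasoning then splits: (i) while $\detected=-1$, the transitions of $\delta^0$ equal those of $\delta$ except that states in $\asw_1(\mathcal{G}^2)$ are collapsed to $q_F$ (but such states are already almost-sure winning for P1 in $\mathcal{H}$ by Lemma~\ref{lemma:belief-based-win-H-sufficient}, since a belief-based ASW strategy in $\mathcal{G}^2\subseteq\mathcal{H}$ lifts) and states in $\asw_2(\mathcal{G}^2)$ are made sinks (but $\pi_1^0$, being winning, never sends a play there with positive probability, since from a P2 state only $\mathsf{SR}_2(q,\mathcal{G}^2)$ actions are enabled and an ASW strategy of P1 must avoid $\asw_2$); (ii) P2's actions from P2-states with $\detected=-1$ are confined to $\mathsf{SR}_2(q,\mathcal{G}^2)$, matching the behavior subjective rationalizable constraint in $\mathcal{H}$; (iii) the only exits from the $\detected=-1$ region in $\mathcal{G}^0$ via P1's hidden-sensor actions lead to nature states with $\detected=0$, at which point $\mathcal{G}^0$ declares victory iff the state is in $\asw_1(\mathcal{G}^{+})$ and a sink otherwise — and again a winning $\pi_1^0$ never produces the sink case. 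Concatenating $\pi_1^0$ (phase one) with $\pi_1^{+}$ (phase two) thus gives a strategy in $\mathcal{H}$ that reaches $q_F$ with probability one against every behavior subjectively rationalizable P2, establishing $q\in\dasw_1(\mathcal{H})$.

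The main obstacle I expect is the bookkeeping in part (iii): one must argue carefully that collapsing the nature states with $\detected=0$ to either $q_F$ or to an absorbing non-winning sink in the definition of $\mathcal{G}^0$ faithfully reflects what happens in $\mathcal{H}$, i.e.\ that P1's almost-sure winning in $\mathcal{G}^0$ genuinely certifies almost-sure winning in $\mathcal{H}$ — this relies on the fact that an ASW strategy can never route positive probability into a non-winning absorbing state, together with Lemma~\ref{lemma:belief-based-win-H-sufficient} to convert the abstract ``reach $q_F$'' guarantee of the delay-attack sub-game into the concrete reachability guarantee inside $\mathcal{H}$. Handling the $\epsilon$-transitions to $q_F$ at nature states uniformly across $\mathcal{H}$, $\mathcal{G}^0$, and $\mathcal{G}^{+}$ is a minor point but worth stating explicitly so that the probability-one claims compose correctly.
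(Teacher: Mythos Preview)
Your proposal is correct and follows essentially the same two-part decomposition as the paper: first show $\asw_1(\mathcal{G}^{+})\subseteq\dasw_1(\mathcal{H})$ by observing that $\mathcal{G}^{+}$ coincides with $\mathcal{H}$ on $Q^{+}$ (the paper even states the stronger equality $\asw_1(\mathcal{G}^{+})\cap Q^{+}=\dasw_1(\mathcal{H})\cap Q^{+}$, which it reuses in Theorem~\ref{claim:winning_region_H}), then handle $\asw_1(\mathcal{G}^{0})$ by analyzing the $\detected\in\{-1,0\}$ boundary. The only stylistic difference is that where you explicitly concatenate $\pi_1^{0}$ with $\pi_1^{+}$ at the handoff states, the paper phrases the same idea as a probability-monotonicity inequality $\Pr^{(\pi_1,\pi_2)}(\mathsf{Reach}(q_F)\mid q,a,\mathcal{H})\ge\Pr^{(\pi_1,\pi_2)}(\mathsf{Reach}(q_F)\mid q,a,\mathcal{G}^{0})$ obtained by backward induction; one minor slip in your write-up is the citation of Lemma~\ref{lemma:belief-based-win-H-sufficient} for the claim that states in $\asw_1(\mathcal{G}^{2})$ are winning in $\mathcal{H}$---that lemma goes from $\mathcal{H}$ to $G$, whereas what you actually need (and correctly state parenthetically) is just that the $\mathcal{G}^{2}$-strategy lifts because $\mathcal{G}^{2}$ embeds in $\mathcal{H}$ on the $\detected=-1$ slice.
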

\begin{proof}
Consider states $q \in Q^{+}$ in the game $\mathcal{G}^{+}$. It follows from the construction of $\mathcal{G}^{+}$ that $\last(q) \in \{0,1,\ldots, \detected \}$. Additionally, the transitions for each state $q$ in $\mathcal{G}^{+}$ follow the same transitions as in $\mathcal{H}$ since the actions available for P1 and P2 are the same in both games. Thus, we conclude that  $\asw_1(\mathcal{G}^{+}) \cap Q^{+} = \dasw_1(\mathcal{H}) \cap Q^{+}$.

Next, consider states $q \in Q^{0}$. In this case, we have $\last(q) \in \{-1,0\}$. For every state $q$ such that  $q \in Q^{0}$ with $\last(q) = -1$, the enabled actions and outgoing transitions in $\mathcal{H}$ and $\mathcal{G}^0$ are the same. For any strategy profile $ (\pi_1,\pi_2)$, 
let $  \mathsf{Pr}^{(\pi_1,\pi_2)}( \mathsf{Reach}( q_F) \mid q,a, \mathcal{G}^0)$ (resp. $\mathsf{Pr}^{(\pi_1,\pi_2)}( \mathsf{Reach}( q_F) \mid q,a, \mathcal{H})$) denote the probability of reaching the final state $q_F$ of $\mathcal{G}^0$ (resp. $\mathcal{H}$) given the state $q$ and action $a$ in the initial game $\mathcal{G}^0$ (resp. $\mathcal{H}$).

For states $q$ with $\last(q) = 0$, then if $q\in \asw_1(\mathcal{G}^{+})$, a transition to $q_F$ is made with probability one. Otherwise, $q$ is made a sink state.  Thus, for a state $q \notin \asw_1(\mathcal{G}^{+})$ and $\last(q)=0$, $\mathsf{Pr}^{(\pi_1,\pi_2)}( \mathsf{Reach}( q_F) \mid q,a, \mathcal{G}^0)=0$.  For the same state in $\mathcal{H}$, it is not a sink state and thus $\Pr^{(\pi_1,\pi_2)}(\mathsf{Reach}(q_F)\mid q,a, \mathcal{H}) \ge 0$. Based on backward induction, it holds that $\Pr^{(\pi_1,\pi_2)}(\mathsf{Reach}(q_F)\mid q,a, \mathcal{H}) \ge  \Pr^{(\pi_1,\pi_2)}(\mathsf{Reach}(q_F)\mid q,a, \mathcal{G}^0)$. Thus $\asw_1(\mathcal{G}^{0}) \subseteq \dasw_1(\mathcal{H})$. 


Taking into account the $\asw_1(\mathcal{G}^{0}) \subseteq \dasw_1(\mathcal{H})$,  $\asw_1(\mathcal{G}^{+}) \subseteq \dasw_1(\mathcal{H})$, it follows that $\dasw_1(\mathcal{H}) \supseteq \asw_1(\mathcal{G}^{0}) \cup \asw_1(\mathcal{G}^{+})$.\\
\end{proof}

\begin{lemma}
    \label{lemma:almost_sure_in_G2_is_subset_G0}
    $\asw_1(\mathcal{G}^{2}) \subseteq \asw_1(\mathcal{G}^{0})$.
\end{lemma}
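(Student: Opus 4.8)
The plan is to derive the inclusion directly from the construction of the initial game $\mathcal{G}^0$ in Definition~\ref{def:P2s_init_game}, so that no genuine reasoning about strategies is needed beyond a trivial one-step argument. First I would record the containment of state spaces: comparing Definition~\ref{def:P2's_perceptual_game} with Definition~\ref{def:P2s_init_game}, we have $Q^2_1 \subseteq Q^0_1$, $Q^2_N \subseteq Q^0_N$, and $Q^2_2 \subseteq Q^0_2$ (both games carry the $\detected = -1$ layer, and $Q^0$ additionally carries $\detected = 0$ nature states), hence $\asw_1(\mathcal{G}^2) \subseteq Q^2 \subseteq Q^0$. So it makes sense to ask whether such states are almost-sure winning in $\mathcal{G}^0$.

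The crux is then the clause in Definition~\ref{def:P2s_init_game} specifying that for every $q \in \asw_1(\mathcal{G}^2)$ the transition function of the initial game is set to $\delta^0(q, q_F) = 1$; that is, from any such $q$, irrespective of which player owns $q$ and irrespective of the action selected, the fictitious final state $q_F$ is reached in a single transition with probability one. Since $q_F$ is a sink state and P1's objective in $\mathcal{G}^0$ is exactly to reach $q_F$, every (belief-based, observation-based) strategy of P1 is almost-sure winning from $q$ against every strategy of P2: the reachability objective is satisfied after one step with probability one. Therefore $q \in \asw_1(\mathcal{G}^0)$, and as $q$ was an arbitrary element of $\asw_1(\mathcal{G}^2)$, we conclude $\asw_1(\mathcal{G}^2) \subseteq \asw_1(\mathcal{G}^0)$.

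I do not expect a real obstacle here; the one thing to be careful about is the mild notational overloading of ``$\delta^0(q,q_F) = 1$'' across the three kinds of states (for a player-owned $q$ it abbreviates $\delta^0(q,a,q_F)=1$ for all enabled actions $a$, while for a nature state it is literal), together with the fact that this redirection of $\asw_1(\mathcal{G}^2)$ states to $q_F$ overrides whatever the default transitions inherited from $\mathcal{H}$ would have prescribed. Once this reading of the definition is made explicit, the argument is immediate and, unlike the proof of Lemma~\ref{lemma:dasw_1_is_supset_asw}, needs neither backward induction nor the solver of \cite{lcss_paper}.
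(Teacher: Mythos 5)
Your proposal is correct and follows essentially the same route as the paper's own proof: both rest on the clause in the construction of $\mathcal{G}^0$ that sets $\delta^0(q,q_F)=1$ for every $q\in\asw_1(\mathcal{G}^2)$, from which membership in $\asw_1(\mathcal{G}^0)$ is immediate. The extra care you take with the state-space containment and the overloaded notation is a reasonable elaboration but does not change the argument.
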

\begin{proof}
 By the construction of $\mathcal{G}^0$, if $q\in \asw_1(\mathcal{G}^2)$, a transition to $q_F$ is made with probability one. As a result, $q\in \asw_1(\mathcal{G}^0)$. Thus, we have that $\asw_1(\mathcal{G}^2) \subseteq \asw_1(\mathcal{G}^0)$.\end{proof}
$\\$
\begin{theorem}
\label{claim:winning_region_H}
The deceptive almost-sure winning region in the game $\mathcal{H}$ satisfies
    $$\dasw_1(\mathcal{H}) = \asw_1(\mathcal{G}^0) \cup \asw_1 (\mathcal{G}^{+}).$$
\end{theorem}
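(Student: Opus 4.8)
The plan is to prove the two inclusions separately. The inclusion $\dasw_1(\mathcal{H}) \supseteq \asw_1(\mathcal{G}^0) \cup \asw_1(\mathcal{G}^{+})$ is exactly Lemma~\ref{lemma:dasw_1_is_supset_asw}, so nothing new is needed there; I would simply cite it. The work is in the reverse inclusion $\dasw_1(\mathcal{H}) \subseteq \asw_1(\mathcal{G}^0) \cup \asw_1(\mathcal{G}^{+})$, which I would prove by contrapositive: if a state $q \in Q$ is \emph{not} in $\asw_1(\mathcal{G}^0) \cup \asw_1(\mathcal{G}^{+})$, then it is not in $\dasw_1(\mathcal{H})$, i.e., P2 has a behavior subjectively rationalizable strategy that prevents $q_F$ from being reached with probability one.

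I would split on the value of $\last(q)$. \textbf{Case $\last(q) \ge 0$:} here $q \in Q^{+}$, and since the enabled actions and transitions of $\mathcal{H}$ restricted to states with $\detected \ge 0$ coincide with those of $\mathcal{G}^{+}$ (as argued in the proof of Lemma~\ref{lemma:dasw_1_is_supset_asw}, $\asw_1(\mathcal{G}^{+}) \cap Q^{+} = \dasw_1(\mathcal{H}) \cap Q^{+}$), a state not in $\asw_1(\mathcal{G}^{+})$ lies in P2's positive-winning region of $\mathcal{G}^{+}$; P2's subjectively rationalizable strategy $\mathsf{SR}_2(\cdot,\mathcal{G}^{+})$ then keeps the play off $\asw_1(\mathcal{G}^{+})$ with positive probability, hence off $q_F$, so $q \notin \dasw_1(\mathcal{H})$. \textbf{Case $\last(q) = -1$:} here $q \in Q^0$. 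Since $q \notin \asw_1(\mathcal{G}^{+})$ is vacuous for $\detected=-1$ states, the hypothesis gives $q \notin \asw_1(\mathcal{G}^0)$, and by Lemma~\ref{lemma:almost_sure_in_G2_is_subset_G0} together with the structure of $\mathcal{G}^0$, $q \notin \asw_1(\mathcal{G}^2)$, so by the trichotomy of the solution of $\mathcal{G}^2$ (cf. Lemma~\ref{claim:P2_remains_in_P2s_winning}, Lemma~\ref{lemma:not_using_hidden_sensor_is_losing}, and the three insights listed before Def.~\ref{def:P2s_init_game}) either $q \in \asw_2(\mathcal{G}^2)$ or $q \in \poswin_1(\mathcal{G}^2)$. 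If $q \in \asw_2(\mathcal{G}^2)$, Lemma~\ref{claim:P2_remains_in_P2s_winning} says P1 cannot even positively win regardless of the hidden sensor, so P2's $\mathsf{SR}_2(\cdot,\mathcal{G}^2)$ strategy defeats P1 and $q \notin \dasw_1(\mathcal{H})$. If $q \in \poswin_1(\mathcal{G}^2) \setminus \asw_1(\mathcal{G}^2)$, I would trace the two kinds of P1 moves available in $\mathcal{H}$: a move in $\act_1^2$ that is not subjectively rationalizable drops the play into $\asw_2(\mathcal{G}^2)$ by Lemma~\ref{lemma:not_using_hidden_sensor_is_losing}, which (invoking the previous sub-case) is losing; a subjectively rationalizable move in $\act_1^2$ keeps $q$ in $\poswin_1(\mathcal{G}^2)$ but, because $q \notin \asw_1(\mathcal{G}^0)$ and $\mathcal{G}^0$ faithfully mirrors these $\detected=-1$ transitions of $\mathcal{H}$, cannot reach $q_F$ almost surely against $\mathsf{SR}_2(\cdot,\mathcal{G}^2)$; and a hidden-sensor move $(a,\gamma_1) \in \act_1 \setminus \act_1^2$ transitions to a $\detected=0$ state $q'$, which in $\mathcal{G}^0$ routes to $q_F$ only if $q' \in \asw_1(\mathcal{G}^{+})$ — but $q \notin \asw_1(\mathcal{G}^0)$ forces every such reachable $q'$ to fail, and by the Case $\last(q')\ge0$ analysis above such $q'$ is not in $\dasw_1(\mathcal{H})$ either. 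In every branch P2 has a behavior subjectively rationalizable response preventing $q_F$ almost surely, so $q \notin \dasw_1(\mathcal{H})$.

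Combining the two inclusions yields $\dasw_1(\mathcal{H}) = \asw_1(\mathcal{G}^0) \cup \asw_1(\mathcal{G}^{+})$. The main obstacle I anticipate is the $\detected=-1$, $q \in \poswin_1(\mathcal{G}^2)$ sub-case: one must argue simultaneously over all three types of P1 actions and show that the \emph{definition} of $\mathcal{G}^0$ — in particular the rule sending $\detected=0$ states to $q_F$ exactly when they lie in $\asw_1(\mathcal{G}^{+})$, and the restriction of P2 to $\mathsf{SR}_2(\cdot,\mathcal{G}^2)$ — makes $\asw_1(\mathcal{G}^0)$ coincide with $\dasw_1(\mathcal{H})$ on these states rather than merely being contained in it. This requires a careful backward-induction/fixpoint argument comparing the reachability values in $\mathcal{G}^0$ and in $\mathcal{H}$ on the $\detected\in\{-1,0\}$ fragment, using that the only place where the two games differ (the $\detected=0$ sink vs.\ continued play) is precisely handled by the already-established Case $\last(q)\ge 0$ equality $\asw_1(\mathcal{G}^{+})\cap Q^{+} = \dasw_1(\mathcal{H})\cap Q^{+}$.
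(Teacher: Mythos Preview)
Your proposal is essentially correct and relies on the same core facts as the paper (the equality $\dasw_1(\mathcal{H})\cap Q^{+}=\asw_1(\mathcal{G}^{+})\cap Q^{+}$ from Lemma~\ref{lemma:dasw_1_is_supset_asw}, Lemma~\ref{lemma:almost_sure_in_G2_is_subset_G0}, and Lemmas~\ref{claim:P2_remains_in_P2s_winning}--\ref{lemma:not_using_hidden_sensor_is_losing}), but the route is genuinely different. You argue the reverse inclusion by contrapositive, classifying P1's possible moves at a state $q\notin\asw_1(\mathcal{G}^0)$ and showing that in every branch P2 has a behaviorally subjectively rationalizable response that blocks almost-sure reachability of $q_F$; you then note that stitching these local observations into a global argument requires a fixpoint/backward-induction comparison between $\mathcal{G}^0$ and $\mathcal{H}$ on the $\detected\in\{-1,0\}$ fragment. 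The paper instead argues directly by \emph{strategy transfer}: it assumes $q\in\dasw_1(\mathcal{H})\setminus(\asw_1(\mathcal{G}^0)\cup\asw_1(\mathcal{G}^{+}))$, takes a witnessing DASW strategy $\pi_1^\dagger$, observes that every run either reaches $F$ while $\detected=-1$ or first enters a state $q_i$ with $\last(q_i)=0$ that (by the $Q^{+}$ equality) lies in $\asw_1(\mathcal{G}^{+})$, and concludes that $\pi_1^\dagger$ restricted to $Q^0$ is already almost-sure winning in $\mathcal{G}^0$, contradicting $q\notin\asw_1(\mathcal{G}^0)$.

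The paper's approach is shorter because it sidesteps the action-by-action case split and the explicit fixpoint argument you flag as the main obstacle: transferring one strategy automatically handles randomization and history dependence. Your approach, on the other hand, is more constructive on P2's side and makes the trichotomy in $\mathcal{G}^2$ (and the role of Lemmas~\ref{claim:P2_remains_in_P2s_winning}--\ref{lemma:not_using_hidden_sensor_is_losing}) more visible, which has expository value. One small point to tighten in your write-up: the claim ``$q\notin\asw_1(\mathcal{G}^0)$ forces every such reachable $q'$ to fail'' is correct for the deterministic P1-to-nature transition, but you should state explicitly that a single hidden-sensor action with $q'\in\asw_1(\mathcal{G}^{+})$ would already put $q$ in $\asw_1(\mathcal{G}^0)$; this is the contrapositive you are using, and making it explicit avoids the appearance of a gap.
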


\begin{proof} 
        We need to show $\dasw_1(\mathcal{H}) \subseteq \asw_1(\mathcal{G}^0) \cup \asw_1 (\mathcal{G}^{+})$. By way of contradiction, suppose there exists a state $q \in \dasw_1(\mathcal{H})  \setminus (\asw_1(\mathcal{G}^0) \cup \asw_1 (\mathcal{G}^{+}))$, then it must be $\last(q)=-1$ because $\dasw_1(\mathcal{H}) \cap Q^+ =  \asw_1 (\mathcal{G}^{+}) \cap Q^+$ (see the proof of Lemma~\ref{lemma:dasw_1_is_supset_asw}).
        
        Let $\pi_1^\dagger$ be P1's DASW strategy, for any P2's behavioral subjectively rationalizable strategy $\pi_2$, 
        the set of runs in $\outcomes_{\mathcal{H}}(q, \pi_1^\dagger, \pi_2)$ satisfy: 1) for any $\rho =q_0q_1\ldots \in \outcomes_{\mathcal{H}}(q, \pi_1^\dagger, \pi_2)$, there exists $k$ such that $q_k\in F$; and 2)
          at least one run $\rho^\ast = q_0q_1\ldots  \in \outcomes_{\mathcal{H}}(q,\pi_1^\dagger, \pi_2)$ such that there exists an $i\ge 0$,  $\last(q_i) = 0$ and $q_i\in \dasw_1(\mathcal{H})\cap Q^+$ and for all $0\le j < i$, $\last(q_j)=-1$. Because if the run $\rho^\ast$ does not exist, then $q\in \asw_1(\mathcal{G}^2) \subseteq \asw_1(\mathcal{G}^0)$.   From Lemma \ref{lemma:almost_sure_in_G2_is_subset_G0}, it contradicts the assumption that $q\notin \asw_1(\mathcal{G}^0)$).
          
In the first case, if $q_k\in F$, the same run is feasible in $\mathcal{G}^0$ and state $q_F$ is reached with probability one in $\mathcal{G}^0$ next. In the second case, because  $q_i\in \dasw_1(\mathcal{H})\cap Q^+$, and $\dasw_1(\mathcal{H})\cap Q^+ = \asw_1(\mathcal{G}^+)\cap Q^+ $, a transition to $q_F$ is also made with probability one in $\mathcal{G}^0$ next. Thus, $\pi_1^\dagger$ with the domain restricted to $Q^0$ is an almost-sure winning strategy for P1 in $\mathcal{G}^0$,  contradicting the assumption that $q\notin \asw_1(\mathcal{G}^0)$.


Thus, with $\dasw_1(\mathcal{H}) \subseteq \asw_1(\mathcal{G}^0) \cup \asw_1(\mathcal{G}^+)$ and Lemma~\ref{lemma:dasw_1_is_supset_asw}, the proof is completed.
\end{proof}

 \textbf{Strategy construction:} Given the winning regions  $\asw_1(\mathcal{G}^0)$, $\asw_1(\mathcal{G}^2)$ and $\asw_1(\mathcal{G}^+)$ and their respective winning strategy for P1 $\pi_1^0, \pi_1^2, \pi_1^+$, 
we can construct the  sensor revealing strategy for P1  $\pi_1^{\ast}:(\asw_1(\mathcal{G}^0) \cup \asw_1(\mathcal{G}^{+})) \rightarrow \dist(\act_1)$ such that
for each $q \in \dasw_1(\mathcal{H})$,

  \[
    \pi_1^{\ast}(q) =\begin{cases}
        \pi_1^{2}(q), & \text{if }  q\in \asw_1(\mathcal{G}^{2}), \\
        \pi_1^{0}(q), & \text{if }  q\in \asw_1(\mathcal{G}^{0})\setminus   \asw_1(\mathcal{G}^{2}), \\
        \pi_1^{+}(q), & \text{if }  q\in \asw_1(\mathcal{G}^{+}).
    \end{cases} \]

We use the running example to illustrate using the sensor-deception strategy to transform a positive winning state for P1 to almost-sure winning.
\\
\begin{example}
\label{example:advantage_with_hidden_sensor_tau_1}
Consider the example in Fig.~\ref{fig:running_example_1} again and let P2's delay of reaction $\tau_r = 1$. First, we construct the delay attack game $\mathcal{G}^{+}$. On solving the delay attack game, we obtain the $\asw_1(\mathcal{G}^{+})$. With the solution of $\mathcal{G}^{+}$, we construct the initial game $\mathcal{G}^0$.  A fragment of the construction of delay attack game $\mathcal{G}^{+}$ and the initial game $\mathcal{G}^0$  is as shown in Fig.~\ref{fig:positive_trans_almost_example}. In Fig.~\ref{fig:positive_trans_almost_example}, the perception actions of P1 are $\sa_1^0 = \{A,B\}$, $\sa_1^{1} = \{D,C\}$ and $\sa_1^2=\{A,D\}$.

\begin{figure}[h!]
  \captionsetup[subfigure]{aboveskip=-20pt,belowskip=0pt}
    \centering       
    \begin{subfigure}[b]{0.7\textwidth}
\begin{tikzpicture}[->,>=stealth',shorten >=1pt,auto,node distance=2.3cm,scale=.58,semithick, transform shape,square/.style={rectangle}]
		\tikzstyle{every state}=[fill=black!10!white];

  \node[state,diamond, aspect = 2.5] (0) 
    at (0, 0) 
    {$(s_1,\{s_2,s_3\},a,\sa_1^0,0)$};
  
  \node[state, rectangle] (1)
    at (5.75, 0) 
    {$(s_3,\{s_2,s_3\},0)$};

  \node[state, ellipse] (2) 
    at (10, 0) 
    {$(s_3,\{s_3\},0)$};
  
  \node[state, diamond, aspect=3] (3) 
    at (10, -2)
    {$(s_5,\{s_5\},a,\sa_1^1)$};
    
  \node[state, rectangle] (4) 
    at (5, -2)
    {$(s_5,\{s_5\},1)$};
    
  \node[state, ellipse] (5) 
    at (0, -2)
    {$(s_5,\{s_5\},1)$};
    
  
    
    

  
  \path (0) edge[dashed] node{} (1)
        (1) edge  node{$\{A\}$} (2)
        (2) edge  node{$(a, \sa_1^1)$}(3)
        (3) edge [dashed]  node{}(4)
        (4) edge node{$\{C\}$} (5)
        (5) edge[loop below] node{$1$} (5)        
  ;

\end{tikzpicture}
        \caption{}
        \label{fig:a}
    \end{subfigure}

    \begin{subfigure}[b]{0.7\textwidth}
\begin{tikzpicture}[->,>=stealth',shorten >=1pt,auto,node distance=2.3cm,scale=.58,semithick, transform shape,square/.style={rectangle}]
		\tikzstyle{every state}=[fill=black!10!white];

  \node[state,ellipse] (0) 
    at (0, 0) 
    {$(s_1,\{s_1\},-1)$};
  
  \node[state, diamond, aspect = 3] (1)
    at (5.75, 0) 
    {$(s_1,\{s_2,s_3\},a, \sa_1^{0}, 0)$};

  \node[state, accepting] (2) 
    at (10, 0) 
    {$q_F$};
  
  \node[state, diamond, aspect=3] (3) 
    at (0, -2.5)
    {$(s_1,\{s_2,s_3\},a,\sa_1^{2},-1)$};
    
  \node[state, rectangle] (4) 
    at (5.35, -2.5)
    {$(s_2,\{s_2,s_3\},-1)$};
    
  \node[state, ellipse] (5) 
    at (10, -2.5)
    {$(s_2,\{s_2,s_3\},-1)$};
    
  \node[state,rectangle] (6) 
    at (0, -5)
    {$(s_3,\{s_2,s_3\},-1)$};
  
  \node[state,ellipse] (7) 
    at (5.35, -5)
    {$(s_3,\{s_2,s_3\},-1)$};
    
    

  
  \path (0) edge node{$(a,\sa_1^{0})$} (1)
        (1) edge[dashed]  node{} (2)
        (2) edge [loop above] (2)
        (0) edge node{$(a,\sa_1^{2})$} (3)
        (3) edge[dashed] node{} (4)
        (4) edge node{$\{D\}$} (5)
        (3) edge[dashed] node{} (6)
        (6) edge node{$\{D\}$} (7)

  ;

\end{tikzpicture}
        \caption{}
        \label{fig:b}
    \end{subfigure}
 \caption{
        \textbf{(a)} A fragment of the constructed $\mathcal{G}^{+}$.
        \textbf{(b)} A fragment of the constructed $\mathcal{G}^0$.
        In the above fragments of games, oval/circle states are P1's states, rectangle states are P2's, and diamond states are nature states.
    }
    \label{fig:positive_trans_almost_example}
\end{figure}
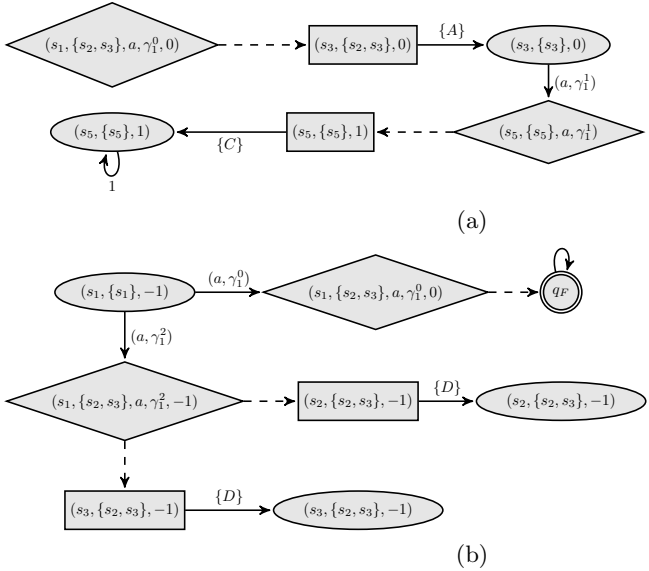
 
	Consider that P1 starts from the state $(s_1, \{s_1\},\detected = -1) \in \mathcal{G}^0$, which a positive winning state for P1 in the original game. Let us first consider the case when P1 would play without using the hidden sensor, then,  P1 could take an action $(a,\{A,D\})$. Here, the perception-action $\{A,D\}$ is the best sensor query for P1. And with the current action, the state transitions to $(s_1,\{s_2,s_3\},a,\{A,D\},-1)$. With nature's state choosing a next state stochastically, we end up in a P2 state $(s_3,\{s_2,s_3\},\{A,D\},-1)$ or $(s_2, \{s_2, s_3\}, \{A, D\}, -1)$. If P2 chooses to attack the sensor $D$, the system transitions to  P1 state $(s_3, \{s_2,s_3\},-1)$ or $(s_2, \{s_2, s_3\}, -1)$. Either P1 state is a losing state because from state $s_2$, action $b$ is winning and $a$ is losing and vice versa from state $s_3$. There is no consistent winning action for the given belief.\\
	If P1 were to use a hidden sensor, then, at the state $(s_1,\{s_1\},\detected=-1) \in \mathcal{G}^0$, P1 chooses an action $(a,\{A,B\})$ which includes the hidden sensor $B$. Then, the state transitions to the nature's state $(s_1,\{s_2,s_3\},a,\{A,B\},0)$ as shown in the Fig.~\ref{fig:positive_trans_almost_example}(b). Next, since the nature state reached is an almost-sure winning state for P1 in the game $\mathcal{G}^{+}$, as shown in the fragment in Fig.~\ref{fig:positive_trans_almost_example}(a), there is a direct transition to the winning state $q_F$. 
 Looking at the game $\mathcal{G}^{+}$, we can see that from the nature state $(s_1,\{s_2,s_3\},a,\{A,B\},0)$, nature selects stochastically one of the two next possible states $(s_3,\{s_2,s_3\},\{A,B\},0)$ or $(s_2,\{s_2,s_3\},\{A,B\},0)$ at which P2 selects an attack action.   Given that $\detected =0 < \tau_r$, P2 cannot attack the hidden sensor and can only attack the other sensor \ie~ $A$. This attack results in a P1's state $(s_3, \{s_3\},0)$ or $(s_2, \{s_2\}, 0)$. It can be seen that both states are almost-sure winning for P1.  \\
Thus, the state $(s_1, \{s_1\}, -1)$, a positive winning state in the original game, is transformed into an almost-sure winning state using sensor deception. 
\\
\end{example}

Next, we are interested in determining the deceptive almost-sure winning region and comparing it with the non-deceptive almost-sure winning region. To do this, we formally introduce the notion of the value of deception.  Specifically, we consider two notions of the value of deception: one for individual states and one for the game as a whole. Informally, the value of deception for a state indicates whether that state is winning \emph{only} on using the hidden sensor. The value of deception for a game compares the deceptive almost-sure winning region with the non-deceptive almost-sure winning region. 
\\
\begin{definition}[Value of Deception]
    Given the assumption \ref{assumption:hidden_sensor}, P1 is said to have an advantage at a state $q \in Q_1^{0}$ if the value of deception is $1$. The Value of Deception (VoD) is a boolean function, as given below.
    \[
     VoD(q) = \begin{cases}
     \begin{aligned}
         & 1, \\& \hspace{1em}\text{for }  q\in \asw_1(\mathcal{G}^{0})\setminus   \asw_1(\mathcal{G}^{2}) \land \\&\hspace{7em} (\supp(\pi_1^0(q))\cap \act_1^{2} = \emptyset), \end{aligned}\\
         \begin{aligned}
        & 0 , & \text{otherwise }. \end{aligned}\\
     \end{cases}
    \] 
 In other words, the value of deception is $1$ for any P1 state in the game $\mathcal{G}^{0}$ if the state transforms to an almost-sure winning state from a positive winning state only on using the hidden sensor. The overall value of deception for the game, \ie, the value of deception as extended to a game, is as follows.
\[
VoD = \dfrac{|\poswin_1(\mathcal{G}^2)|-|\poswin_1(\mathcal{G}^0)|}{|\poswin_1(\mathcal{G}^2)|}
\]
where, 
    \begin{itemize}
        \item $|\poswin_1(\mathcal{G}^2)|$ represents the number of positive winning states for P1 in the game $\mathcal{G}^2$.
        \item $|\poswin_1(\mathcal{G}^0)|$ represents the number of positive sure winning states for P1 in the game $\mathcal{G}^0$.
    \end{itemize}
    \end{definition}
    In words, the VoD of the game measures the fraction of positive winning states of P1 being changed into deceptive almost-sure winning.

\section{Experimental Case Studies}
\label{sec:Experimental_case_studies}

In this section, we illustrate the developed theory with two practical examples. Initially, we utilize a stochastic graph network scenario to highlight the advantage of using a hidden sensor. Subsequently, we employ it in a stochastic Mario Bros gridworld environment. All simulation experiments were implemented using Python on a Windows 11 platform with a $3.2$$GHz$ processor and $32$$GB$ memory
\footnote{The code for the illustrative example and the Mario Bros. example is available on \url{https://github.com/Sumukha-Udupa/Reactive_synthesis_of_sensor_revealing_strategies}.}.

\subsection{Illustrative example.}


In this illustrative example, consider a graph network as shown in Fig.~\ref{fig:Example1_graph}. The graph network comprises $25$ states and the objective for P1 is to reach one of the target states while avoiding the losing sink states, namely nodes $3, 9$, and $16$ (colored red). Additionally, states $11$ and $12$ are the final target states (colored green). The edges in the graph denote permissible actions from each state. The set of all available actions are $A_1 = \{\mathsf{D1, D2, D3, D4, D5, D6}\}$. An edge labeled with an action $a$ from one state  $s$ to another state $s'$ means that by taking that action $a$ from state $s$, it is possible to reach $s'$. For instance, on taking the action $\mathsf{D1}$ at the state $0$, the state transitions to $7$ or $1$.  The graph only shows edges/transitions with non-zero probabilities and the exact probabilities are omitted.  For qualitative planning, we only need to know if there is a positive probability from one state to another given the action and do not need to know the exact probability.

P1 has deployed a sensor network of $5$ sensors $\mathsf{A,B,C,D,E}$. Each sensor monitors a set of specific nodes. Sensor $\mathsf{A}$ observes the nodes $\{\mathsf{4}, \mathsf{10}, \mathsf{13}, \mathsf{15}, \mathsf{17}, \mathsf{19}\} $, while  $\mathsf{C}$ monitors $\{\mathsf{3, 9, 11, 12, 16, 18, 20, 21, 22, 23, 24}\}$. Sensor $\mathsf{D}$ covers $\{\mathsf{3, 4, 7, 9, 10, 17, 19, 21, 22, 23, 24 }\}$, and $\mathsf{E}$ is responsible for $\{\mathsf{1, 2, 5, 6, 8, 14}\}$. Finally, sensor $\mathsf{B}$ oversees the states  $\{\mathsf{4, 5, 6, 15, 18, 20,} \\ \mathsf{23, 24}\}$. All of the sensors are Boolean sensors which return a Boolean value $\mathsf{True}$ when P1 is within the sensor's range and return $\mathsf{False}$ otherwise. This sensor network is susceptible to attack by the adversary P2. 

\begin{figure}[ht!]
        \centering       
       \includegraphics[scale=0.34]{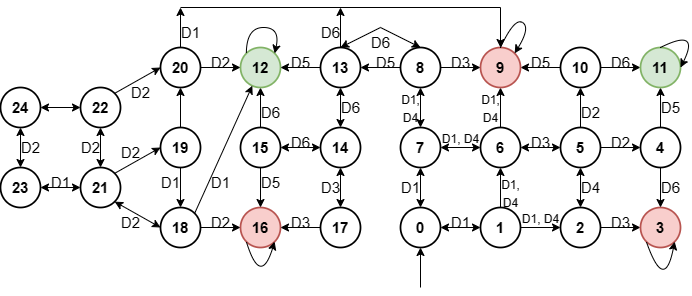}
        \caption{Graphical representation of the adversarial environment.
        }
      \label{fig:Example1_graph}
\end{figure}

\textbf{Example Setup.} To effectively illustrate the advantage of incorporating a hidden sensor into planning paths, we first establish a baseline by experimenting with all sensors known to both players. 

We then consider the scenario where only the sensor $\mathsf{B}$ operates as the hidden sensor. We denote the delay in reaction to the hidden sensor by P2 as $\tau$.

\begin{enumerate}[]
    \item \textbf{Case 1}: \emph{The setup with no hidden sensor.}
    
    In this configuration, P2 is aware of all the sensors. P1 can query any three sensors at a time to determine its position. And P2 can attack any one of the sensors at a time. 
    
    \item \textbf{Case 2}: \emph{The setup with a hidden sensor.}
    
    In this configuration, P1 can query any three sensors simultaneously. The sensor $\mathsf{B}$ in the sensor network is hidden. Initially, the adversary can only attack one of the four known sensors, namely, $\mathsf{A,C,D,E}$. However, once P1 queries the hidden sensor, P2 gains the ability to attack any of the available sensors after a delay  $\tau$.

\end{enumerate}

The experimental results for the cases mentioned above, where we vary the delay parameter ($\tau$), are summarized in Table~\ref{tab:1}. 
\\
\begin{table}[h!]      \vspace{-1ex}

     \caption{Results of experimentation with Case 1 (without a hidden sensor) and Case 2 (with hidden sensor)  by varying the delay ($\tau$) for the illustrative example.}
    \vspace{1ex}
    \label{tab:1}
 \setlength\tabcolsep{1pt}
        \begin{tabular}{|l|c|c|c|}
        \hline

\textbf{Setup ($\tau$)}&
\textbf{Deceptive Winning Initial States}& \textbf{VoD} \\
\hline
 Case 1 (-) 
&$\mathsf{[4, 8, 10, 11, 12, 13, 15, 18, 20]}$ &$0$ \\ \hline  Case 2 (0) 
&$\mathsf{[4, 8, 10, 11, 12, 13, 15, 18, 20]}$ &$0$ \\ \hline

 Case 2  (1) 
&$\mathsf{2, 5, 6, [4, 8, 10, 11, 12, 13, 15, 18, 20]}$ &$0.25$ \\ \hline
 Case 2 (2) 
&$\mathsf{2, 5, 6, [4, 8, 10, 11, 12, 13, 15, 18, 20]}$ &$0.25$ \\ \hline
 Case 2 (3) 
&$\mathtt{0, 1, 2, 5, 6, 7, [4, 8, 10, 11, 12, 13, 15, 18, 20]}$ &$0.50$ \\ \hline

    \end{tabular}
 \end{table}

\subsubsection{Discussion}
The results of the experiments on the illustrative example tabulated in Table~\ref{tab:1} show the set of initial states from which P1 has a deceptive $\asw$  strategy, given the best response sensor attack strategy by P2. These states are referred to as the winning initial states. To obtain the winning initial states with deception, we construct P2's perceptual game (Def. \ref{def:P2's_perceptual_game}), the delayed attack game (Def. \ref{def:delayed_attack_game}) and then P1's initial game (Def. \ref{def:P2s_init_game}). 

We also observe from Table~\ref{tab:1} that the number of winning initial nodes is the same for Cases 1 and 2 when there is no hidden sensor and when $\tau_d=0$. 

When P1 has a hidden sensor (sensor $\mathsf{B}$) and P2's delay in attack is greater than 0, P1's winning initial states are also shown in Table \ref{tab:1}. 
It's noteworthy that even with a small delay, such as $\tau = 1$, there is an evident increase in the number of winning initial nodes for P1.

To see why the delay helps, let us consider the game starting from node $\mathsf{5}$. From node $\mathsf{5}$, P1 can reach either $\mathsf{10}$ and $\mathsf{4}$ with action $\mathsf{D2}$ and is unsure which one is reached given the partial observation. However, with a belief $\{\mathsf{4, 10}\}$, P1 has no consistent actions to reach the goal state $\mathsf{11}$. If P1 employs the hidden sensor and P2 has a delay $\tau \ge 1$, P1 can distinguish which nodes it is at thereby making $\mathsf{5}$ a winning initial state.  Given that the only winning action in the state $\mathsf{5}$ is to query the hidden sensor, $VoD(\mathsf{5})=1$. 

Furthermore, we observe that an almost-sure winning node for P2 remains winning for P2. For example, if the game starts from node  $\mathsf{17}$  where P2 has an almost-sure winning strategy in the game $\mathcal{G}^2$ where P2 does not know the presence of hidden sensors. The same state remains to be almost-sure winning even if P1 can employ the hidden sensor. This observation aligns with  Lemma~\ref{claim:P2_remains_in_P2s_winning}. 

For Case $\tau=0$, the construction times for games $\mathcal{G}^2$, $\mathcal{G}^{+}$, and $\mathcal{G}^0$ are $0.07s$, $0.2s$, and $0.002s$, respectively. The overall time to construct and solve all the games to obtain the policy is $1.136s$. Similarly, for Case $\tau=1$, the construction times for these games remain the same, and the overall time is $1.312s$. In Case $\tau=2$, the construction times are $0.069s$, $0.28s$, and $0.002s$ for $\mathcal{G}^2$, $\mathcal{G}^{+}$, and $\mathcal{G}^0$ respectively, with $1.917s$ as the overall time to solve to obtain the policy. Finally, for Case $\tau = 3$, the solve time to obtain the policy is $3.1081s$, and the construction times for the three games are $0.069s$, $0.41s$, and $0.002s$.

\subsection{Mario Bros. Example.}

In this example, we delve into a scenario inspired by the iconic game Mario Bros., set within a stochastic $\mathsf{6}\times\mathsf{6}$ gridworld environment, as illustrated in Fig.~\ref{fig:mario_bros_example}. Just as in the game, our protagonist, Mario, endeavors to navigate this gridworld to reach the Princess, while steering clear of perilous Bombs and the menacing Piranha Plant, both losing sink states. The adversary within this environment mirrors the notorious King Koopa from the original game, with the objective of thwarting Mario from accomplishing his task.

\begin{figure}[ht!]
        \centering       
       \includegraphics[scale=0.4]{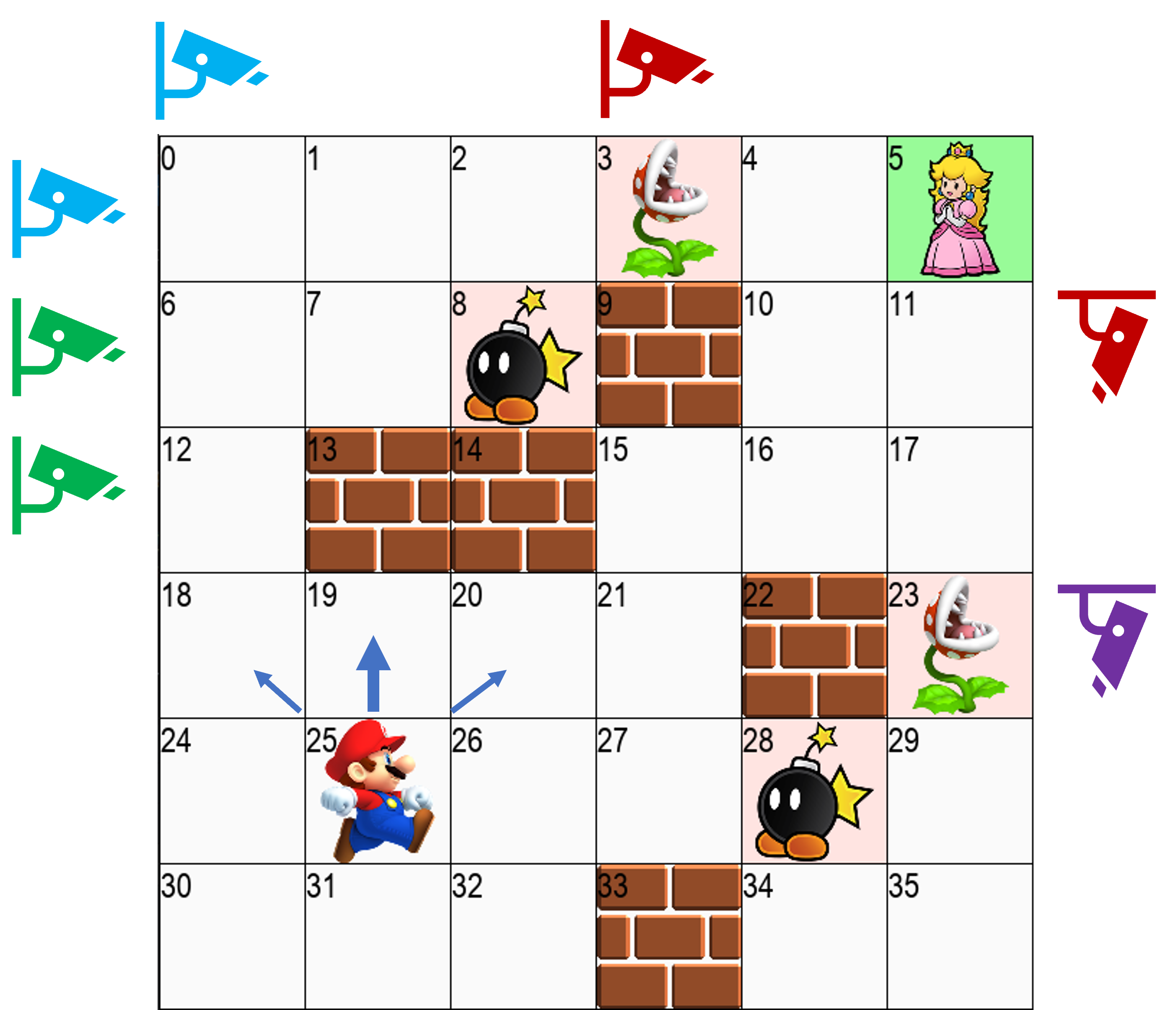}
        \caption{Mario in the stochastic environment. Sensors $\mathsf{S0}$-Green, $\mathsf{S1}$-Red, $\mathsf{S2}$-Blue, $\mathsf{S3}$-Purple.
        }
      \label{fig:mario_bros_example}
\end{figure}

\textbf{Game dynamics and players' information.}
Mario's traversal across the grid unfolds through a selection of actions: $\mathsf{Up}$, $\mathsf{Down}$, $\mathsf{Left}$, and $\mathsf{Right}$. Within this stochastic terrain, each action Mario takes carries a degree of unpredictability. Specifically, there's a probability $p$ that he arrives at the intended cell, and slips to unintended cells with probability $1-p$. For instance, consider the cell $25$: if Mario opts to go $\mathsf{Up}$, he can reach cell $19$ with probability $p$, and with an equal probability of $\frac{(1-p)}{2}$ end up in either of cells $18$ or $20$ (as indicated by the arrows in Fig.~\ref{fig:mario_bros_example}). If any of these potential states is a wall, Mario will only transition to wall-free cells. For example, if Mario is in cell $19$ and selects the action $\mathsf{Up}$ again, he will move exclusively to cell $12$ because the other possible cells, $13$ and $14$, are obstructed by walls.  

 Mario has imperfect observation of his current cell and he depends on the sensor network deployed for localizing himself. The deployed sensor network, however, is vulnerable to adversarial attacks by King Koopa, who aims to prevent Mario from accomplishing his mission. 
 
The sensor network consists of $4$ sensors $\mathsf{S0, S1, S2, S3}$. Each of the sensors represents a deployed Boolean camera sensor (or a range sensor) as shown in Fig.~\ref{fig:mario_bros_example}, which returns $\mathsf{True}$ when Mario is within the range and $\mathsf{False}$ otherwise. In the given scenario in Fig.~\ref{fig:mario_bros_example}, we have that the sensor $\mathsf{S0}$ observes the cells $\{6, 7, 8, 9, 10, 11, 12, 13, 14, 15, 16, 17\}$, while $\mathsf{S1}$ covers the cells $\{6, 7, 8, 9, 10, 11, 3, 15, 21, 27, 33\}$. And $\mathsf{S2}$ monitors the cells $\{0, 1, 2, 3, 4, 5, 6, 12, 18, 24, 30\}$ while $\mathsf{S3}$ monitors the cells $\{18, 19, 20, 21, 22, 23\}$. 

\textbf{Example Setup.} As in the illustrative example, we establish the baseline by conducting the experiment where King Koopa is aware of all sensors in the network and can attack any of them.

We then consider the scenario where Mario has the sensors $\mathsf{S1}$ and $\mathsf{S3}$ hidden. Thus, King Koopa (the adversary) is only aware of the sensors $\mathsf{S0, S2}$ initially, while Mario is aware of the hidden sensors $\mathsf{S1}$ and $\mathsf{S3}$ as well.
We denote the delay in the reaction to the hidden sensor by King Koopa as $\tau$. 

\begin{enumerate}[]
    \item \textbf{Case 1}: \emph{Mario has no hidden sensor.}
    
    In this configuration, we have that P1 can make any of the following sensor queries $\{\{\mathsf{S0, S2}\}, \\ \{\mathsf{S0, S1}\}, \{\mathsf{S1, S2}\}, \{\mathsf{S0, S1, S3}\}  \}$. The adversary can attack the sensor at a time.  
    
    \item \textbf{Case 2}: \emph{Mario has hidden sensors.}
    
    In this setup, we have that Mario can query any of the following sensor queries $\{\{\mathsf{S0, S2}\}, \\ \{\mathsf{S0, S1}\}, \{\mathsf{S1, S2}\}, \{\mathsf{S0, S1, S3}\}  \}$ to determine his position from the sensor network with sensors $\mathsf{S1}$ and $\mathsf{S3}$ hidden. Thus, the adversary initially is unaware of the existence of hidden sensors, sensors $\mathsf{S1}$ and $\mathsf{S3}$, present in the sensor network. The adversary can attack any one of the sensors at a time from the sensors that are known, \ie, one of the sensors $\mathsf{S0,S2}$. And upon revealing \emph{at least one of the two hidden sensors} by Mario, we have that King Koopa \emph{becomes aware of all of the hidden sensors in the network} and can attack any one of all available sensors after a delay in reaction ($\tau$). 

\end{enumerate}

The results of Mario's experimentation for the above cases with varying delay ($\tau$) are summarized in Table~\ref{tab:2}.

\begin{table}[h!]      \vspace{-1ex}

     \caption{Results of the experimentation with Case 1 (without hidden sensors) and Case 2 (with hidden sensors) by varying the delay ($\tau$) for Mario Bros. example.}
    \vspace{1ex}
    \label{tab:2}
 \setlength\tabcolsep{0pt}
        \begin{tabular}{|l|c|c|c|}
        \hline

\textbf{Setup ($\tau$)}
& \textbf{Deceptive Winning Initial States}& \textbf{VoD} \\
\hline
 Case 1 (-) 
&$\mathsf{[4, 10, 11, 16, 17, 5]}$ &$0$ \\ \hline  Case 2 (0) 
&$\mathsf{[4, 10, 11, 16, 17, 5]}$ &$0$ \\ \hline

 Case 2 (1) 
&$\mathsf{15, 20, [4, 10, 11, 16, 17, 5]}$ &$0.105$ \\ \hline
 Case 2 (2) 
&$\mathsf{21, 27, 32, 34, 15, 20, [4, 10, 11, 16, 17, 5]}$ &$0.315$ \\ \hline
 Case 2 (3) 
&$\mathsf{26, 31, 21, 27, 32, 34, 15, 20, [4, 10, 11, 16, 17, 5]}$ &$0.421$ \\ \hline

    \end{tabular}
 \end{table}

\subsubsection{Discussion}

With the results of the Mario Bros example summarized in Table~\ref{tab:2}, like before, we obtain the winning initial states with deception using King Koopa's perceptual game, the delayed attack game, and Mario's initial game. 

We see that the computation time for constructing Mario's initial game reduces with increasing $\tau$. The observation is seen because, in the scenario we have examined, the number of states belonging to the almost-sure winning states in the delayed attack game increases. Consequently, many of these states result in self-loops during the construction of Mario's initial game. Additionally, it's worth noting that the computation time for building the delayed attack game grows with larger values of $\tau$, as anticipated.

In the considered scenario, when $\tau = 1$, Mario uses the hidden sensor $\mathsf{S1}$ with the action $\mathsf{Right}$ while at the cell $15$. This choice is driven by the fact that cells $10$ and $16$ do not have a consistent action to reach the target cell $5$. By utilizing the hidden sensor, Mario is aware of his true current state allowing him to make the correct move. Likewise, the cells $15$ and $16$ also lack a consistent action. Thus, when $\tau=1$, Mario has no winning strategy from the cell $21$. This occurs because even though Mario can use the hidden sensor $\mathsf{S1}$ to determine the true current state if that state happens to be cell $15$, he would once again need to employ the hidden sensor. However, since the hidden sensors are revealed, King Koopa can attack the sensor $\mathsf{S1}$ to ensure that Mario's belief includes both cells $10$ and $16$.

For Case $\tau=0$, the construction times for games $\mathcal{G}^2$, $\mathcal{G}^{+}$, and $\mathcal{G}^0$ are $22.7s$, $75.3s$, and $44.7s$, respectively. The overall time to construct and solve all the games for obtaining the policy is $421.326s$. Similarly, for Case $\tau = 1$, the construction times for these games are $22.7s$, $154.3s$, and $78.1s$, and the overall time is $479.3088s$. In Case $\tau=2$, the construction times are $22.5s$, $234.4s$, and $52.9s$ for $\mathcal{G}^2$, $\mathcal{G}^{+}$, and $\mathcal{G}^0$ respectively, with $804.820s$ as the overall solve time to obtain the policy. Finally, for Case $\tau=3$, the time to obtain the policy is $1191.516s$, and the construction times for the three games are $22.6s$, $482.9s$, and $34.8s$.

\section{Conclusion}
\label{sec:Conclusion}
 
In this study, we studied qualitative planning with joint control and active sensor queries in a stochastic environment, where an adversary can carry out sensor jamming attacks and has perfect observations. Being aware of potential attacks, we investigate whether a class of capability deception---hiding sensors---can gain strategic advantages for the control system. Based on the analysis, we derived that there is no advantage to deviating from the adversary's perceptive rationalizable strategy without revealing hidden sensors. Further, 
almost-sure deceptive winning may be achieved only from states that are positive winning but not almost-sure losing in the original game, in which the attacker has full information. 
  Two case studies validated our algorithm and showed the advantages of deception against sensor attacks.

  In future research, deception could be explored as a countermeasure against more advanced cyber-physical attacks. For example, an attacker may be capable of attacking jointly the sensors and actuators of the cyber-physical system. Defense against joint sensor and actuator attacks may require a stochastic game model and corresponding solutions.  Another potential extension is to study game design that strategically places hidden sensors to induce and capitalize on asymmetric information in the dynamic interaction.  


\appendix

\section{Almost-sure winning algorithm}
\label{appendix:almost_sure_winning_algo}
Given a stochastic reachability game with a partially controllable function, as given in \cite{lcss_paper}, a belief-augmented reachability game with P2's belief and P2's belief of P1's belief is constructed. The belief-augmented game is solved using the Algorithm~\ref{alg:posg-reachability} from \cite{lcss_paper} as given below.

 \begin{algorithm}[h!]
{ 
\caption{Belief-based Almost-Sure Winning Region \cite{lcss_paper}}
\label{alg:posg-reachability}

\begin{algorithmic}[1]
	\item[\textbf{Inputs:}] Two-player reachability game with augmented states $\mathcal{G}$ and  P2's positive winning region $\poswin_2(\mathcal{G})$ in $\mathcal{G}$.
	\item[\textbf{Outputs:}] P1's \ac{asw} region $\asw_1$. 
    \State $j \gets 0$;~$Y_j \gets Q\setminus \poswin_2(\mathcal{G})$
    \Repeat
        \State $k \gets 0$;~$R_k \gets \{q_F\}$
        \Repeat 
            \State $R_{k+1} \gets R_k \cup \prog_1(R_k, Y_j) \cup \prog_2(R_k, Y_j) \cup \prog_N(R_k, Y_j)$
            \State $k \gets k+1$
        \Until{$R_{k+1} = R_k$}
        \State $Y_{j+1} \gets R_k$;~$j \gets j+1$
    \Until{$Y_{j+1} = Y_{j}$}
    \State \Return $\asw_1\gets Y_j$.
 \end{algorithmic}
}      
\end{algorithm}

In the Algorithm ~\ref{alg:posg-reachability} we compute a belief-based almost-sure winning randomized strategy for P1. First, the positive winning region of P2 is obtained using the solution of two-player stochastic games with two-sided perfect observations. These states are the states from which P1 cannot reach the winning state with probability one even when P1 has perfect observations and thus cannot reach the final states when P1 has partial observation. 

Next, the algorithm first initializes a set $Y_0 = Q \setminus \poswin_2(\mathcal{G})$ and it iteratively refines the set $Y_i$ and obtains $Y_{i+1}$. At each iteration $i$, a set of states from which P1 can ensure to stay within $Y_i$ with probability one and reaches the final state with some positive probability in finite steps. The algorithm also uses the functions $\mathsf{Prog}_1(R, Y)$ which outputs a set of states from which P1 has consistent actions to reach $R$ in one step, $\mathsf{Prog}_2(R, Y)$ which outputs a set of states from which P2 cannot prevent reaching $R$ and $\mathsf{Prog}_N(R, Y)$ which outputs a set of states from which the nature state can ensure to remain within $Y$ with probability one and with a positive probability reach the final state.

In the inner loop of the algorithm, it iteratively computes $R_{k+1}$ given $R_k$ until a fixed point is reached. The set $Y$ is then updated with this obtained fixed point and the outer loop is implemented again until an outer fixed point is reached. The thus obtained fixed point is the P1's almost-sure winning region and P1's winning strategy is extracted as given in \cite{lcss_paper}.

\Urlmuskip=0mu plus 1mu\relax
\bibliographystyle{plain}
\bibliography{refs}

\end{document}